\documentclass[runningheads]{llncs}

\usepackage{xspace}
\usepackage[utf8]{inputenc}
\usepackage{graphicx}
\usepackage{amsmath}
\usepackage{float}
\usepackage{amsfonts}
\usepackage{algorithm}      
\usepackage{algpseudocode}  
\usepackage[breaklinks,bookmarks=false]{hyperref}
\usepackage{enumerate}
\usepackage{paralist}
\usepackage{enumitem}

\newcommand{\PINS}{PINS\xspace}
\newcommand{\PINSlong}{predecessor in nested sets\xspace}
\newcommand{\PINSlongcap}{Predecessor in nested sets\xspace}
\newcommand{\PISNS}{PISNS\xspace}
\newcommand{\PISNSlong}{predecessor in shrinking nested sets\xspace}
\newcommand{\Oh}{\mathcal{O}}
\newcommand{\wordSize}{\mathcal{W}}
\newcommand{\polylogws}{\text{\hspace{1mm}polylog}}
\newcommand{\polylog}{\text{polylog}}

\DeclareMathOperator{\link}{sl}
\DeclareMathOperator{\cost}{c}
\DeclareMathOperator{\pred}{pred}

\title{Weighted ancestors in suffix trees}

\author{Pawe{\l} Gawrychowski\inst{1} \and Moshe Lewenstein\inst{2} \and Patrick K. Nicholson\inst{1}}
\institute{Max-Planck-Institut für Informatik, Saarbrücken, Germany \and
Bar-Ilan University, Israel}

\begin{document}

\maketitle

\begin{abstract}
The classical, ubiquitous, {\em predecessor problem} is to construct a
data structure for a set of integers that supports fast predecessor
queries. Its generalization to weighted trees, a.k.a. {\em the
  weighted ancestor problem}, has been extensively explored and
successfully reduced to the predecessor problem. It is known that any
solution for both problems with an input set from a polynomially
bounded universe that preprocesses a weighted tree in
$\Oh(n\polylogws(n))$ space requires $\Omega(\log\log n)$ query
time. Perhaps the most important and frequent application of the
weighted ancestors problem is for suffix trees. It has been a
long-standing open question whether the weighted ancestors problem has
better bounds for suffix trees. We answer this question positively: we
show that a suffix tree built for a text $w[1..n]$ can be preprocessed
using $\Oh(n)$ extra space, so that queries can be answered in
$\Oh(1)$ time. Thus we improve the running times of several
applications. Our improvement is based on a number of data structure
tools and a periodicity-based insight into the combinatorial structure
of a suffix tree.
\end{abstract}

\section{Introduction}

The well-known and widely-used {\em predecessor problem} is to
preprocess a set of integers so that the predecessor of a given number
can be located. Tight tradeoffs between construction space and query
times for such a data structure are known; see P\u{a}tra\c{s}cu's
survey on predecessor search~\cite{Patrascu08}. The predecessor
problem was generalised to trees by Farach and
Muthukrishnan~\cite{FarachHashing}. It is called the \emph{weighted
  ancestor problem} and is defined as follows. We are given a rooted
tree in which every node $v$ has an associated integer \emph{weight}
$w(v)$ as input.  The weights satisfy the min-heap property, that is
the weight of every node is larger than the weight of its parent (the
tree need not be binary). The goal of the problem is to preprocess the
tree so that the predecessor of a given number, among the weights of
all the ancestor nodes of a given leaf, can be located. Farach and
Muthukrishnan~\cite{FarachHashing} give a randomised data structure,
which can be constructed in $\Oh(n)$ time and space plus the time and
space for a predecessor data structure storing $n$ integers from
$[1,U]$ given an $n$-node tree with weights from $[1,U]$. The query
time is $\Oh(\pred(n,U))$, where $\pred(n,U)$ is the predecessor query
time. Amir {\it et al.}~\cite{ALLS07} present a {\em deterministic}
version of the structure.

In the simpler unweighted version of the problem, called the
\emph{level ancestor problem}, we must preprocess a tree on $n$ nodes,
so that we can retrieve the $k$-th ancestor of a given node
efficiently. Berkman and Vishkin showed that such a query can be
answered in $\Oh(1)$ time, using $\Oh(n)$ preprocessing time and
space~\cite{Berkman}.  Later, a much simpler solution was discovered
by Bender and Farach-Colton~\cite{LevelAncestor}.  A dynamic version
has also been studied, where new leaves can be added to the
tree~\cite{Alstrup,Dietz}.  However, the solutions for level ancestor
strongly use the fact that the difference in ``weight'' between levels
is one, and therefore gives no insight into the weighted ancestors
problem.

The application for which the {\em weighted} ancestor problem was
initially introduced is {\it substring
  hashing}~\cite{FarachHashing}. In substring hashing one wants to
preprocess a given string $w[1..n]$, to allow the efficient
computation of the hash $h(i,j)$ of any of its substrings $w[i..j]$.
The hashing should be perfect, i.e., $h(i,j)=h(i',j')$ if and only if
$w[i..j]=w[i'..j']$. In~\cite{FarachHashing} the substring hashing
problem was reduced to weighted ancestor queries on a suffix
tree. Since the universe size is $\Oh(n)$ for suffix trees, one can
use a predecessor data structure such as a $y$-fast
trie~\cite{Willard83} to obtain $\Oh(n)$ preprocessing time and space
so that any hash can be computed in $\Oh(\log\log n)$ time. Their
solution also gives the same bounds for the weighted ancestor problem
in {\em any} tree where the weights are polynomial in $n$.

In the context of suffix trees the weighted ancestors problem can also
be viewed as preprocessing a suffix tree built for a string $w[1..n]$,
so as to allow the retrieval of the (implicit or explicit) node
corresponding to any substring $w[i..j]$, given $i$ and $j$. There are
numerous applications and we will mention a few later.

The weighted ancestor problem was generalised by Kopelowitz and
Lewenstein~\cite{KopelotDynamicWeighted}, who considered the dynamic
setting and showed how to support leaf insertions and edge splitting
operations (required to maintain a suffix tree for a growing
text). They showed that, up to an additive $\Oh(\log^\star n)$ term,
the static problem is as easy as predecessor search: if one can implement
a linear space static predecessor structure with a query time of
$\pred(n,U)$, then a weighted ancestor query can be answered in
$\Oh(\pred(n,U)+\log^{\star}n)$ time after linear preprocessing.  A
variant of the weighted ancestor problem was also considered by
Alstrup and Holm~\cite{Alstrup}, in which $U=\polylog(n)$, making the
situation significantly simpler.

Since the weighted ancestor is a generalisation of the predecessor
problem, it cannot have better time/space bounds than the predecessor
problem. Hence, by the known bounds for the weighted ancestor problem
in which the universe size is polynomially bounded in $n$, any
weighted ancestor data structure of size $\Oh(n\polylogws(n))$ must
have query time of $\Omega(\log\log n)$.  Furthermore, this lower
bound holds \emph{even when} the the node weights are bounded by $n$,
see Appendix~\ref{sec:lb}. Nevertheless, weighted ancestors on suffix
trees are a special case of the general weighted ancestors
problem. Hence, it is plausible that one can do better. This was
indirectly expressed by Farach and Muthukrishnan~\cite{FarachHashing}
where the question was raised whether batched substring hashing can be
sped up. This led to the challenge of solving weighted ancestors on
suffix trees in $\Oh(n)$ preprocessing time and space and $\Oh(1)$
query time which has been an open question for a long time now.

{\bf Contribution} All our results hold in the word-RAM model with
logarithmic word size.  We show that, for weighted ancestor in suffix
trees, it is possible to achieve $\Oh(1)$ deterministic worst-case
query time using $\Oh(n)$ additional space.

To sidestep the lower bound for the weighted ancestor problem, we look
deeper into the structure of a suffix tree, and apply a
periodicity-based argument. This argument allows us to decompose the
tree into sufficiently simple subtrees, which are then preprocessed
separately. To preprocess the subtrees, we develop an efficient
solution for a variant of the predecessor problem, in which we are
given multiple correlated sets of integers. The correlation allows us
to circumvent the predecessor lower bound, which would be relevant if
we were to consider each of the sets separately.  As our solution
contains many details, we provide a high level overview in
Section~\ref{sec:overview}.  This yields improved query times to
several problems.

\begin{description}[leftmargin=0pt]

\item {\bf Substring Search} Preprocess the suffix tree built for
  $w[1..n]$ to answer substring search queries, i.e., given a pair of
  indices $i,j$ return the locus of $w[i..j]$ in the suffix tree (the
  node at the end of the partial path denoting $w[i..j]$). This is
  solved by a weighted ancestor query on a suffix tree: go to the node
  representing $w[i..n]$ and answer the predecessor query of $j-i+1$
  (in this case we prefer the analogous successor query). Since
  weighted ancestors take $\Oh(n)$ preprocessing space and $\Oh(1)$
  query time, substring search has the same bounds.

\item {\bf Substring Hashing} We define $h(i,j)=\langle$locus of
  $w[i..j], i-j+1\rangle$, where the locus of $w[i..j]$ is found by
  substring search. It is easy to verify that $h(i,j)=h(i',j')$ iff
  $w[i..j]=w[i'..j']$. Hence, substring hashing can be improved to
  $\Oh(n)$ preprocessing space and $\Oh(1)$ query time. Consequently,
  batched substring hashing is also optimal. By not insisting that we
  return the corresponding node of the suffix tree, one can achieve an
  optimal $\Oh(1)$ query after $\Oh(n)$ preprocessing with a simpler
  method by Gawrychowski~\cite{Gawrychowski}.  Nevertheless, the
  number of bits in the answer by Gawrychowski~\cite{Gawrychowski} is
  $3\log n$ while in our solution it is $2\log n$, which is optimal.
  Moveover, in some applications we want to access the suffix tree
  node, as it provides more information.  For instance, we can then
  report all of its occurrences, or the leftmost occurrence.

\item{\bf Cross-Document Pattern Matching} Index a collection of
  documents, so that given a substring $w[i..j]$ of the $k$-th
  document, we can search for its appearances in the $k'$-th
  document. This problem was introduced by Kopelowitz {\it et
    al.}~\cite{KKNS14}, who also considered some extensions. Their
  linear space solution uses a generalised suffix tree with weighted
  ancestor queries. With our result the query time becomes
  $\Oh(1)$. The improvement can be also embedded in the extensions.

\item {\bf Searching Substrings Internally in the Suffix Tree} Cole
  {\it et al.}~\cite{CGL04}, when proposing data structures for
  indexing a text $w[1..n]$ with $k$ mismatches, $k$ errors and $k$
  wildcards, suggested the LCP data structure. The LCP data structure
  comes in two variants, {\em rooted LCP} and {\em unrooted LCP}. The
  former preprocesses an arbitrary collection of suffixes of $w[1..n]$
  in $\Oh(n)$ space and allows a search from the root of the
  compressed trie of these suffixes in $\Oh(\log\log n)$ time by using
  weighted ancestor queries on a careful decomposition of the
  compressed trie. The latter preprocesses such collection in
  $\Oh(n\log n)$ space to allow a search from an arbitrary node with
  an even more detailed decomposition. Both have query time
  $\Oh(\log\log n)$ because of the weighted ancestors. Alas, reducing
  this to $\Oh(1)$ is problematic because the compressed trie is not a
  suffix tree, and the nice properties that we need are
  lost. Nevertheless, we can support $\Oh(1)$ time rooted and unrooted
  LCP queries on the original suffix tree.

\item {\bf Indexing with $k$ Wildcards} In Cole {\it et
  al.}~\cite{CGL04} there is an implicit solution to indexing with $k$
  wildcards in $\Oh(n\log n)$ space, that supports queries in time
  $\Oh(m+\sigma^k\log\log n + occ)$, using the LCP data structures
  mentioned above. The space was improved to $\Oh(n)$ by Bille {\it et
    al.}~\cite{BGVV12}. Recently, in~\cite{LNV14} the running time was
  improved to $\Oh(m+\sigma^k\sqrt{\log\log\log n} + occ)$. Now this
  can be further improved to $\Oh(m+\sigma^k + occ)$ with unrooted LCP
  queries on the suffix tree itself.  The space
  improvements~\cite{BGVV12,LNV14} do not immediately carry over.

\item {\bf Fragmented Pattern Matching} The problem of {\em Substring
  Concatenation}, defined by Amir {\it et al.}~\cite{ALLS07}, requires
  preprocessing a text $w[1..n]$ so that given $i,j$ and $i',j'$we can
  return a substring of $w$ which is the concatenation of $w[i..j]$
  and $w[i'..j']$. Amir {\it et al.}~\cite{ALLS07} solved this by using
  a suffix tree, a reversed suffix tree, weighted ancestor queries on
  both and a node intersection data structure, all in $\Oh(\log\log n)$
  time. However, this can also be solved with a couple of LCP data
  structures, one rooted and one unrooted. Combined with our new
  result this achieves $\Oh(1)$ query time. The more general
  {Fragmented Pattern Matching} requires proprocessing a text
  $w[1..n]$ so that after receiving a collection of $k$ substrings as
  pairs of indices, one can answer whether there exists a substring
  $w[i_1..j_1]w[i_2..j_2]\hdots w[i_k..j_k]$ within the text. By
  extending the result for substring concatenation this takes $\Oh(k)$
  time.

\item {\bf Weighted Ancestors in Arbitrary Trees} In the process of
  solving our problem, we remove the additive $\Oh(\log^{\star}n)$
  term from the solution of Kopelowitz and
  Lewenstein~\cite{KopelotDynamicWeighted}, improving the cost of
  weighted ancestor queries in \emph{any tree} to $\Oh(\pred(n,U))$
  after linear preprocessing.  This improvement may be important in
  other cases where the node weights are not arbitrary and predecessor
  lower bounds do not apply.

\end{description}

\section{Preliminaries}

A \emph{suffix tree} of a string $w$, denoted $ST(w)$, is a compacted
trie containing all suffixes of $w\$$, where $\$$ is a unique
character not occurring in $w$.  A \emph{generalised suffix tree} of a
collection of strings $w_1,w_2,\ldots,w_k$, denoted
$GST(w_1,w_2,\ldots,w_k)$, is a compacted trie containing all suffixes
of $w_1\$_1$, $w_2\$_2$, $\ldots$, $w_k\$_k$, where each $\$_i$ is a
unique character not occurring in any of the strings. We will often
use $w_{i}[j..]$ to denote the suffix of $w_{i}$ starting at the
$j$-th character.  In a compacted trie we define the \emph{depth} of a
node to be its number of explicit ancestors, and the \emph{string
  depth} to be the length of the string it represents. In a
(generalised) suffix tree we define the \emph{suffix link} of a node
representing the string $as$ to be a pointer to the node representing
$s$. Every explicit node $v$ stores such a link $\link(v)$. If $v$
is implicit, then $\link(v)$ is not stored, but we will use this
notion in some proofs.

We want to preprocess a suffix tree built for a string $w[1..n]$, so
that, later, we can quickly retrieve the node corresponding to any
substring $w[i..j]$. If the node is explicit, then we simply return a
pointer to it, and if it is implicit, then we return a pointer to the
corresponding edge of the suffix tree.  We call this special case of
the weighted level ancestor problem {\it substring retrieval}.

We say that a natural number $p$ is a period of string\footnote{We use
  the term \emph{string} rather than \emph{word}---as is common in the
  combinatorial setting---to avoid confusion with machine words in our
  RAM.}  $w$ if $w[i]=w[i+p]$ for every $i$ such that both sides are
defined. The smallest such $p$ is called the period of $w$, and if the
period is at most $|w|/2$ we call $w$ periodic. Otherwise it is
aperiodic. The well-known property of periods is that if $p$ and $q$
are both periods of $w$, and additionally $p+q\leq |w|$, then
$\gcd(p,q)$ is a period of $w$, too. A \emph{cyclic rotation} of a
string $w[1..n]$ is a string $w[i+1..n]w[1..i]$. A \emph{Lyndon word}
has the property that it is lexicographically smallest among all its
cyclic rotations. A string $u$ is \emph{primitive} if it cannot be
represented as $u=v^{i}$ with $i>1$. The Lyndon rotation of a
primitive string is its unique Lyndon word.

All space bounds are measured in machine words, and all time bounds
are deterministic worst-case.  The following result is known to hold
in the word-RAM model with logarithmic word size: after linear
preprocessing, a collection of dynamic sets, each containing at most
$\polylog(n)$ integers at any moment, can be maintained in a linear
space structure allowing insertions, deletions and predecessor
searching in any of the sets in $\Oh(1)$ time~\cite{Fredman}.  In our
setting, all the sets will be static hence a much earlier (and
simpler) version of this result suffices~\cite{Ajtai}. Furthermore,
and the integers are from $[1,n]$, hence a simpler implementation
suffices~\cite{Grossi}. We will call the resulting structure an {\it
  atomic heap} even though that name technically refers to the more
powerful structure.

We also make use of the following result, known to hold in the
word-RAM model with logarithmic word size. Given a bit vector of
length $\Oh( \log n)$, we can perform the following
operations~\cite{J89} in $\Oh(1)$ time: $\texttt{rank}(i)$, which
returns the number of $1$ bits up to position $i$; and
$\texttt{select}(i)$, which returns the index of the $i$-th $1$ bit.
This can be done via table lookup by storing a \emph{universal table}
of size $\Oh(n^\varepsilon)$ space, for any $\varepsilon > 0$.  We
assume that we have access to such a table.

Finally, we use the solution to the level ancestor
problem~\cite{LevelAncestor}, so that we can retrieve a node of our
compacted trie as soon as we know its depth.  It is known that any
tree can be preprocessed in $\Oh(n)$ space, so that any such query can
be processed in $\Oh(1)$ time~\cite{LevelAncestor}. By applying this
result to our compacted tries, we can retrieve a node as soon as we
know its depth.  Hence we need only focus on computing the depth.

\section{\label{sec:overview}Intuition and overview}

We start by presenting the intuition behind our solution and an
overview of its formalisation, which is the main contribution of the
paper. To make the presentation easier to understand, we first
describe a simpler solution to the substring retrieval problem that
occupies $\Oh(n \polylogws(n))$ space, and allows $\Oh(1)$ time
queries.  Later, in Section~\ref{sec:space-reduction}, we present the
details to reduce the space usage to $\Oh(n)$.

Our goal is to preprocess the set of ancestors of every leaf. More
precisely, if $D(v)$ is the set of string depths of all of the
ancestors of $v$, we want to perform a predecessor search in any
$D(v)$, where $v$ is a leaf. We could preprocess every such $D(v)$
separately, but then the best query time that we can hope for is
$\Oh(\log\log n)$, assuming that the allowed preprocessing space for
every $D(v)$ is $\Oh(|D(v)| \polylogws(|D(v)|))$~\cite{Patrascu08}. To
overcome this, we observe that the sets corresponding to different
leaves $v$ are correlated.  More precisely, if we consider two leaves
$v$ and $u=\link(v)$, then $x\in D(v)$ and $x>0$ implies $x-1\in
D(u)$.  If for every leaf corresponding to a suffix $w[i..n]$ we
define a set $S_i=\{i+x : x\in D(v) \}$, then we get a collection of
sets $S_1,S_2,\ldots,S_n\subseteq[1,N]$ such that $S_i\cap
[n_{i+1},N]\subseteq S_{i+1}$, where $N=n+1$ and $n_i=i$, see
Figure~\ref{fig:example}.  We call the problem of supporting
predecessor queries on these sets {\it \PISNSlong}. In
Section~\ref{sec:predecessor} we show that such collections can be
processed using $\Oh(N\log^2 N+\sum_{i}|S_{i}|)$ space so that
predecessor searching in any $S_{i}$ takes just $\Oh(1)$ time (the
space is further improved in Section~\ref{sec:space-reduction}). So,
the correlation between different sets allows us to circumvent the
known lower bound for near-linear space predecessor structures.  Now
if it were the case that $\Oh(\sum_{i}|S_{i}|)\in
\Oh(n\polylogws(n))$, we would be done.

\begin{figure}[t]
\centering
\includegraphics[width=\textwidth]{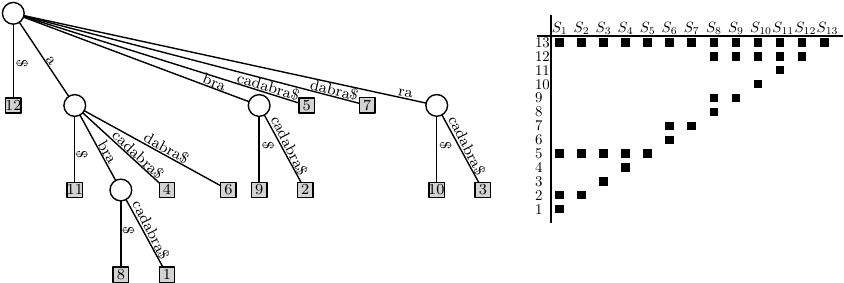}
\caption{\label{fig:example}A suffix tree built for
  \texttt{abracadabra} and the corresponding
  instance of \PISNS. }
\end{figure}

Unfortunately, it can happen that a single explicit node contributes
to multiple $D(v)$'s, hence the sum might be substantially
larger. However, when we try to construct a string with such large
sum, it seems that the most natural candidates are very repetitive,
for example $\texttt{a}^{n-1}\texttt{b}$. This is not a coincidence.
If the same explicit node $u$ contributes to two different sets $D(v)$
and $D(v')$, and the string depth of $u$ is at least $\frac{3}{4}n$,
then there are two different suffixes of the whole string such that
their longest common prefix is of length at least $\frac{3}{4}n$.
This means that the period of the \emph{middle part} of the string,
i.e., $w[\frac{1}{4}n..\frac{3}{4}n]$, is at most $\frac{1}{4}n$, or
in other words the middle part is periodic. So the intuition is that
the larger $\Oh(\sum_{i}|S_{i}|)$ is, the more periodic the
string---or at least its large part---is. 

If the period of the whole string is $p$, then (by the periodicity
lemma) any two suffixes $w[i..]$ and $w[j..]$ either branch out at
string depth less than $2p$, or the shorter suffix is a prefix of the
longer one \emph{and} $j=i+\alpha p$.  Moreover, any explicit node at
string depth less than $2p$ is the lowest common ancestor of two
leaves corresponding to suffixes of length less than $3p$.  Hence the
whole suffix tree can be decomposed into the top part, which is the
suffix tree built for the length $3p$ suffix of $w$, and $p$ long
paths corresponding to the longer suffixes starting at different
offsets modulo $p$ (with one leaf attached to every explicit node on
such path).  On the $i$-th path, all explicit nodes are at string
depths $i+\alpha p$, so it is trivial to answer a predecessor query in
$\Oh(1)$ time there. If we additionally preprocess the top part, which
is easy if $p$ is small, we can answer any predecessor query.  Hence
the intuition is that the larger the sum becomes, the less interesting
the tree is. Unfortunately, formalising this intuition is quite
technical, as we need more control on how we measure the
repetitiveness of our string: looking at its periodicity alone is not
sufficient.

To make the formalisation easier, in Section~\ref{sec:reduction} we
reduce the substring retrieval problem to a more structured variant.
In {\it long substring retrieval} we must preprocess a generalised
suffix tree $T$ built for a collection of strings
$w_{1},w_{2},\ldots,w_{\beta}$, where $\beta = \Oh(n /\ell)$, all of
the same length $\ell$, so that we can retrieve the node corresponding
to any substring $w_{k}[i..j]$ of length at least
$\frac{3}{4}\ell$. We call each $w_{i}$ a \emph{document}.  All
documents will be substrings of $w$, hence we specify them by giving
their starting and ending positions.  We show that if we can
preprocess such a collection in $S(n)$ space achieving $\Oh(1)$ query
time for the \emph{long} substring retrieval problem, then we can
solve the original substring retrieval in $\Oh((S(n)+n)\log n)$ space
and the same query time. The idea is to decompose the string into
fragments of length roughly $2^i$ for $i=0,1,\ldots,\log n$.

In Section~\ref{sec:long} we solve long substring retrieval.  We
partition the substrings of length at least $\frac{3}{4}\ell$ of all
documents into two types depending on whether their period is at least
or at most $\frac{1}{4}\ell$. Informally, both types are easy to deal
with, but for different reasons. Observe that if a substring of some
$w_{i}$ has period at most $\frac{1}{4}\ell$, then the middle part of
$w_{i}$ of length $\frac{1}{2}\ell$ is periodic. This allows us to
quickly detect if the period of $w_{i}[j..k]$ that we query with is at
least $\frac{1}{4}\ell$.

The simple case is when no $w_{i}$ has a periodic middle part, i.e.,
all substrings of length at least $\frac{3}{4}\ell$ have periods at
most $\frac{1}{4}\ell$. This implies that no $w_i$ has two suffixes of
length at least $\frac{3}{4}\ell$ such that their longest common
prefix is of length at least $\frac{3}{4}\ell$. We define $T'$ to be
the \emph{bottom part} of $T$ consisting of all nodes at string depth
at least $\frac{3}{4}\ell$. The number of leaves in any subtree of
$T'$ is exactly the number of \emph{different} documents with suffixes
in that subtree.  Additionally, we partition the nodes of $T'$ into
\emph{levels} according to the rounded logarithm of the number of
documents in their subtree.  Since this number is equal to the number
of document ending in the subtree, the nodes at the same level
constitute a collection of disjoint paths. Also, by looking at the
suffix links we observe that the explicit nodes on these paths are, in
a certain sense, nested. We exploit this nesting to retrieve the node
lying on any of these paths in constant time. This is done by reducing
the problem to \PISNSlong, allowing us to sidestep predecessor lower
bounds.

In the general case some $w_{i}$ might have a periodic middle
part. Then $T'$ is also the bottom part of $T$, but we additionally
prune it to contain only the nodes such that their subtree does not
contain the same document twice. We preprocess the pruned tree $T'$ as
in the simple case, which allows us to retrieve the node if the period
of $w_{i}[j..k]$ is larger than $\frac{1}{4}\ell$. To process
$w_{i}[j..k]$ with period at most $\frac{1}{4}\ell$, we group all
substrings of length at least $\frac{3}{4}\ell$ with period at most
$\frac{1}{4}\ell$ according to their periods.  More precisely, for
such $w_{i}[j..k]$ with period $p$ we find the (unique) Lyndon word
$r$ such that $|r|=p$ and $w_{i}[j..k]$ is a substring of
$r^{\infty}$. For every possible $r$ we build a separate structure
allowing us to locate the node corresponding to any $w_{i}[j..k]$ of
length at least $\frac{3}{4}\ell$ being a substring of $r^{\infty}$.
The structure is again based on the observation that the explicit
nodes in the corresponding part of $T$ are in a certain sense nested.

\section{\label{sec:predecessor}\PINSlongcap}

In this section we develop an efficient solution for a certain variant
of the predecessor problem, where we want to preprocess a collection
of sets of integers as to allow predecessor searching in any of
them. By predecessor searching we mean returning the rank of the
element which is the predecessor of a given value.  We start with a
version where the sets are $S_{1},S_{2},\ldots,S_{k}\subseteq [1,N]$
and $S_{i}\subseteq S_{i+1}$, which we call {\it \PINSlong} or
\emph{\PINS}.

\begin{lemma}
\label{lemma:nested}
\PINS can be solved in $\Oh(N\log N+\sum_{i}|S_{i}|)$ space and
$\Oh(1)$ time.
\end{lemma}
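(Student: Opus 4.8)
The plan is to recast the problem as a prefix-rank (equivalently, two‑dimensional dominance counting) question and attack it with a combination of universe reduction and word‑level tricks. Since $S_1\subseteq\cdots\subseteq S_k\subseteq[1,N]$, every $x\in[1,N]$ has a well‑defined \emph{arrival time} $a(x)\in\{1,\ldots,k,\infty\}$ with $x\in S_i$ iff $a(x)\le i$; collapsing consecutive indices with $S_i=S_{i-1}$ we may assume $k\le N+1$. The rank of the predecessor of $q$ in $S_i$ is precisely $|\{x\le q:a(x)\le i\}|$, i.e.\ a prefix count in the array $a[1..N]$. First I would reduce the element axis: partition $[1,N]$ into $b=\lceil N/\log N\rceil$ consecutive blocks of length $\lceil\log N\rceil$ and, writing $B$ for the block containing $q$, split the answer as $\Phi(B-1,i)+\psi_B(q,i)$, the number of already‑arrived elements in earlier blocks plus the in‑block contribution.

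The in‑block term $\psi_B(q,i)$ is easy and exact in $\Oh(1)$ time. A block holds at most $\log N$ elements, so I would store, for each block, (i) the arrival times of its elements in an atomic heap (all $b$ blocks together form $b$ static sets of $\polylog(n)$ integers, hence $\Oh(N)$ space with $\Oh(1)$ rank), and (ii) a table indexed by $r\in\{0,\ldots,\log N\}$ giving the bitmask of the positions of the $r$ earliest‑arriving elements of the block ($\Oh(N)$ words overall). A query retrieves $r=$ (number of elements of block $B$ with arrival $\le i$) by one atomic‑heap rank and returns $\psi_B(q,i)=\mathrm{popcount}\bigl(\mathrm{mask}_B[r]\ \texttt{AND}\ M_q\bigr)$, where $M_q$ masks the positions up to $q$ within the block.

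The cumulative term $\Phi(\cdot,i)$ is the crux. Note that $\Phi(\cdot,i)$ is the predecessor‑rank function of a \emph{nested multiset} on the coarser universe $[1,b]$, whose multiplicities are the per‑block arrival counts. Inserting, at time $i$, one element lying in block $c$ raises $\Phi(B,i)$ by one for all $B\ge c$; passing to the difference array this is a single point increment at $c$, and the number of such increments over the whole timeline is at most $\sum_i|S_i|$. So I would maintain a partially‑persistent prefix‑sum structure on $[1,b]$ under point increments, costing $\Oh(b+\sum_i|S_i|)$ space, and then \emph{recurse}, since the successive difference‑array states again form a (multiset) \PINS instance, now on a universe of size $b=N/\log N$; the universe shrinks geometrically, so the extra space telescopes and the whole construction stays within $\Oh(N\log N+\sum_i|S_i|)$. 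The genuine obstacle is \emph{time}: a naive recursion spends one step per level and there are $\omega(1)$ of them. The construction must therefore be arranged so that a query consults only $\Oh(1)$ levels — e.g.\ by interleaving the element‑axis blocking with a second, coarse blocking so that the only cumulative counts still needing a data‑structure lookup live on a universe of size $\Oh(\log N)$ (handled by one direct table, or one atomic heap, of size $\Oh(k\log N)\subseteq\Oh(N\log N)$) while everything finer is resolved by the word‑level masks above. This is exactly where the hypothesis $S_i\subseteq S_{i+1}$ is indispensable: without it each $S_i$ in isolation falls under the $\Omega(\log\log n)$ predecessor lower bound, and it is the $\sum_i|S_i|$ term in the space budget that pays for storing the incidence data that lets us circumvent it.
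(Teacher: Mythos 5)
Your reformulation of the query as the dominance count $|\{x\le q : a(x)\le i\}|$ is correct, and your treatment of the in-block term is sound and within budget (the atomic heap on arrival times plus the $\mathrm{mask}_B[r]$ table and a popcount). The gap is the cumulative term $\Phi(B-1,i)$, which is where the entire difficulty of the lemma lives, and your proposal does not resolve it. The recursion you set up shrinks the universe only by a $\log N$ factor per level, so it has $\Theta(\log N/\log\log N)$ levels and a query touching one level per step takes $\omega(1)$ time; you acknowledge this, but the proposed repair---a second, coarse blocking so that only a universe of size $\Oh(\log N)$ needs a lookup---does not close the gap. With $\Oh(\log N)$ top blocks you can tabulate $\Phi$ at top-block boundaries for every time $i$ in $\Oh(k\log N)$ space, and the word-level masks resolve the finest $\log N$ positions, but between these two scales there remain $\Theta(N/\log^2 N)$ fine blocks inside the query's top block whose arrived-element counts must still be aggregated; neither mechanism covers them, and tabulating $\Phi$ for every time at every one of the $N/\log N$ fine-block boundaries costs $\Theta(kN/\log N)$, which is not bounded by $\Oh(N\log N+\sum_i|S_i|)$. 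As written, the argument establishes $\Oh(\log N/\log\log N)$ query time, not $\Oh(1)$.

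The intended proof is far more elementary and exploits the nesting directly rather than through arrival times. Since $|S_1|\le\cdots\le|S_k|$, group the sets by $\lfloor\log|S_i|\rfloor$; each group is a contiguous range of indices ending in some $S_{g}$. For each of the at most $\log N$ groups, store the full predecessor table of $S_{g}$ over $[1,N]$ (this is the $N\log N$ term), and for each $S_i$ in the group store a table mapping each element of $S_{g}$ to its predecessor in $S_i$; since $|S_{g}|\le 2|S_i|$, these tables cost $\Oh(\sum_i|S_i|)$ in total. A query is two table lookups: the predecessor $y$ of $x$ in $S_{g}$, then the predecessor of $y$ in $S_i$, and correctness uses exactly the hypothesis $S_i\subseteq S_{g}$. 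Your heavier machinery is closer in spirit to what is needed later for the space-reduced variant of the lemma, but for the present statement the two-level indirection suffices.
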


\begin{proof}
We partition the collection of sets into $\log N$ groups. The $k$-th
group contains all $S_{i}$ with $|S_{i}|\in[2^{k},2^{k+1})$. Because
  $|S_{i}|\leq |S_{i+1}|$, we have that the $k$-th group contains the
  sets $S_{g_{k-1}+1},\ldots,S_{g_{k}-1},S_{g_{k}}$, where
  $0=g_{0}\leq g_{1}\leq\ldots\leq g_{\log N}$. For every such group
  we allocate a table of length $N$, where we explicitly store the
  predecessor of every $x\in [1,N]$ in $S_{g_{k}}$. These tables allow
  us to locate the predecessor of any $x\in[1,N]$ in the last set of
  any group in $\Oh(1)$ time. Additionally, for every set $S_{i}$
  belonging to the $k$-th group we allocate a table of length
  $|S_{g_{k}}|$, where we store the predecessor of every $x\in
  S_{g_{k}}$ in $S_{i}$. To locate the predecessor of $x\in [1,N]$ in
  $S_{i}$, we first locate its predecessor $y$ in $S_{g_{k}}$.  Then
  we locate the predecessor of $y$ in $S_{i}$. Both steps take
  $\Oh(1)$ time using the precomputed tables. Furthermore, the table
  allocated for every $S_{i}$ is of length $|S_{g_{k}}|\leq 2|S_{i}|$,
  making the total space usage $\Oh(N\log N+\sum_{i}|S_{i}|)$.  \qed
\end{proof}

\begin{figure}[t]
\centering
\includegraphics[scale=0.73]{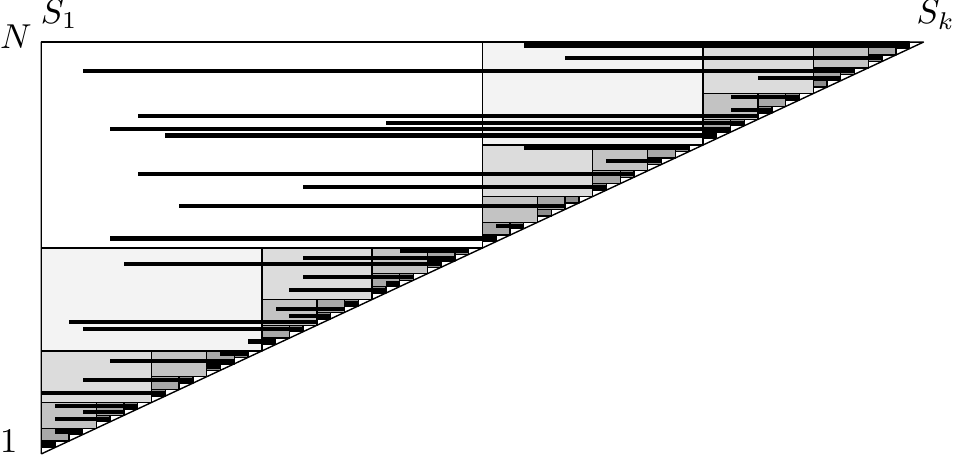}
\caption{\label{fig:triangle}An instance of \PISNS.  Each rectangle
  is an instance of \PINS.  Each horizontal line represents an
  element. If $x\in S_{i}$ then $x\in S_{j}$ as long as $j>i$ and $n_{j}\le x$.}
\vspace{-0.3cm}
\end{figure}

Now we discuss the more involved version of the problem, where we
relax the requirement that $S_{i}\subseteq S_{i+1}$. In {\it
  \PISNSlong} (or \emph{PISNS}) the sets have the additional property
that one can choose $N=n_{1}\geq n_{2}\geq \ldots \geq n_{k}$ such
that $S_{i}\subseteq [n_{i},N]$, $S_{i} \cap [n_{i+1},N] \subseteq
S_{i+1}$, and each $S_i \neq \emptyset$. We reduce this problem to a
number of carefully chosen instances of \PINS, as illustrated in
Figure~\ref{fig:triangle}.

\begin{lemma}\label{lemma:shrinking-nested}
\PISNS can be solved in $\Oh(N\log^{2}N+\sum_{i}|S_{i}|)$
space and $\Oh(1)$ time.
\end{lemma}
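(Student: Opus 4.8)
The plan is to reduce \PISNS to a logarithmic number of \PINS instances, as hinted by Figure~\ref{fig:triangle}, and then invoke Lemma~\ref{lemma:nested} on each. The key structural feature to exploit is that the "staircase" defined by the thresholds $N=n_1\ge n_2\ge\dots\ge n_k$ can be cut into horizontal strips of geometrically decreasing width. First I would build, over the universe $[1,N]$, a balanced binary segment-tree-like hierarchy of $\Oh(\log N)$ levels whose level-$t$ nodes partition $[1,N]$ into dyadic blocks of length $N/2^{t}$. The idea is that an element $x\in S_i$ "survives" into all later sets $S_j$ (with $j>i$ and $n_j\le x$) by the defining containment $S_i\cap[n_{i+1},N]\subseteq S_{i+1}$; so if we restrict attention to those $x$ lying in a fixed dyadic block $B$ that is entirely above the relevant threshold, the restrictions $S_i\cap B$ for the consecutive range of indices $i$ for which this holds do form a genuinely nested family, i.e.\ a \PINS instance on universe $B$. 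The thresholds $n_i$ determine, for each block $B$, the contiguous index interval during which $B$ is "fully active"; assigning each block to the coarsest level at which it first becomes fully active for a given index, every pair $(i,x)$ with $x\in S_i$ is charged to exactly one block, at a level that is essentially determined by the position of $x$ relative to $n_i$.

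Concretely, I would proceed as follows. For each level $t\in\{0,1,\dots,\log N\}$ and each dyadic block $B$ at that level, consider the set of indices $i$ such that $n_i \le \min B$ but $n_{i'} > \min B$ for the index $i'$ that is the parent-block's earlier activation point — that is, $B$ "becomes available" at a specific index and its sibling/ancestor structure means each index $i$ is served, within each of the $\Oh(\log N)$ levels, by exactly one block. Within that block and that index range, the sets $S_i\cap B$ are nested upward (smaller index $\subseteq$ larger index, since no element of $B$ is ever dropped once $B$ is above all subsequent thresholds), so Lemma~\ref{lemma:nested} applies with universe parameter $|B|$ and total element count $\sum_i |S_i\cap B|$. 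Summing the \PINS space bound $\Oh(|B|\log|B| + \sum_i|S_i\cap B|)$ over all blocks: the $|B|\log|B|$ terms sum, per level, to $\Oh(N\log N)$ (the blocks at a level partition $[1,N]$), and over $\Oh(\log N)$ levels to $\Oh(N\log^2 N)$; the element terms sum to $\Oh(\sum_i|S_i|)$ because each pair $(i,x)$ is counted $\Oh(1)$ times, once per level at most but in fact charged to a single canonical block per level and to only $\Oh(1)$ levels total by a standard dyadic-decomposition argument. That yields the claimed $\Oh(N\log^2 N + \sum_i|S_i|)$ space.

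For a query $\pred(x, S_i)$, I would locate $x$ in the $\Oh(\log N)$ blocks containing it along the tree path, identify which of them is the canonical block serving index $i$ for the relevant level (this is an $\Oh(1)$-time computation from precomputed per-index pointers, or via \texttt{rank}/\texttt{select} on a small bit vector recording which levels are active for $i$), query the corresponding \PINS structure in $\Oh(1)$ time via Lemma~\ref{lemma:nested}, and then combine: the true predecessor of $x$ in $S_i$ is the maximum over the $\Oh(\log N)$ candidate predecessors coming from the blocks to the left of and containing $x$. The subtlety is that we must take a maximum over $\Oh(\log N)$ candidates, not $\Oh(1)$; to get $\Oh(1)$ query time I would instead organize the blocks so that for each index $i$ the universe $[1,n_i+\text{something}]$ relevant to $S_i$ is covered by a \emph{single} nested chain — precomputing, for each $i$, a pointer into one \PINS structure that already contains all of $S_i$ (this is possible because once we fix $i$, the whole of $S_i$ lies in $[n_i,N]$, a suffix of the universe, and a suffix of a dyadic universe is a disjoint union of $\Oh(\log N)$ canonical blocks whose \PINS structures can be \emph{pre-merged} for each $i$ without asymptotic blowup).

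The main obstacle, and the step I expect to require the most care, is exactly this last point: getting a single $\Oh(1)$-time predecessor lookup rather than an $\Oh(\log N)$-way combination, while keeping the space at $\Oh(N\log^2 N + \sum_i|S_i|)$. The natural fix — for each $i$, build one \PINS instance on the universe $[n_i,N]$ containing $S_i$ and all later $S_j$ restricted appropriately — risks counting each element $\Theta(\log N)$ times. The resolution is to observe that the family of universes $\{[n_i,N]\}_i$ is itself nested (since $n_i$ is nonincreasing), so the \PINS instances can be \emph{layered}: build one global \PINS-like structure per "scale" $2^s$ of the quantity $N-n_i$, so that $S_i$ sits in the structure for the scale matching $N-n_i$, each element $x\in S_i$ is stored only in scales $\ge \log(N-n_i)$ but the defining containment $S_i\cap[n_{i+1},N]\subseteq S_{i+1}$ guarantees $x$ need only be \emph{physically} stored once per scale in the nested table of that scale, giving $\Oh(\log N)$ copies total per element, hence $\Oh(\sum_i|S_i|\log N)$ — and we absorb the extra $\log N$ by noting $\sum_i|S_i| \le$ (this term is already dominated, or we simply state the bound as in the lemma with the $N\log^2N$ term which is what we were going to pay anyway). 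I would present the clean version: cut into $\Oh(\log N)$ levels, apply Lemma~\ref{lemma:nested} to each, argue the charging, and handle the query-time combination by precomputed per-$(i,\text{level})$ pointers plus a final $\Oh(1)$ selection using the fact that among the $\Oh(\log N)$ candidate answers only the one from the unique block straddling $x$ at the finest active level can be the true predecessor, the coarser ones being subsumed by nesting.
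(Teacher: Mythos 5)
Your decomposition is essentially the paper's: the universe is halved recursively, the thresholds $n_i$ determine at which level of the halving each index range ``splits off'' its right half into a genuinely nested \PINS instance, each pair $(i,x)$ is stored in exactly one such instance, and a per-index guide bit vector with \texttt{rank}/\texttt{select} routes a query to the right instance. Where this is what you commit to, the space bound follows exactly as you compute it. However, two steps of your write-up are not sound as stated. First, the space accounting wavers: you correctly claim each $(i,x)$ is charged to exactly one canonical block, but later concede the scheme might store $\Oh(\log N)$ copies per element and propose to ``absorb'' the factor into the $N\log^2 N$ term. That does not work: $\sum_i|S_i|\log N$ is not dominated by $N\log^2 N+\sum_i|S_i|$ when $k$ is large. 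You must commit to the one-instance-per-element version (element $x\in S_i$ lives only in the instance created at the recursion level where the half-universe containing $x$ is peeled off for $i$'s index range); the nesting property $S_i\cap[n_{i+1},N]\subseteq S_{i+1}$ is exactly what makes each such peeled-off family a valid \PINS instance.

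Second, your final justification for $\Oh(1)$ query time is wrong: it is not true that ``only the block straddling $x$ can be the true predecessor.'' If $S_i$ has no element $\le x$ inside $x$'s canonical block, the predecessor lies in a block entirely to the left of $x$. The correct argument --- and the one the paper uses --- is structural: for a fixed $i$, the canonical blocks covering $[n_i,N]$ are totally ordered, with coarser levels lying entirely to the right of finer ones, so ``all blocks left of $x$'' is precisely ``all levels finer than the level of $x$'s block.'' Hence the answer is either in $x$'s block, or it is the maximum of $S_i$ in the first \emph{non-empty} block at the next finer active level, which one \texttt{rank} and one \texttt{select} on the guide bit vector locate in $\Oh(1)$ time. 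No $\Oh(\log N)$-way maximum and no pre-merging of per-index universes (which, as you suspected, would blow up the space) is ever needed. With those two repairs your argument coincides with the paper's proof.
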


\begin{proof}
We decompose the problem into a number of instances of \PINS in a
recursive manner. We choose $k'$ such that $n_{k'}\leq\frac{N}{2}$ and
either $\frac{N}{2}< n_{k'+1}$ or $k'=k$. Then for every
$i=1,2,\ldots,k'$ we define $S'_{i}=\{x-\frac{N}{2}: x\in S_{i}\cap
(\frac{N}{2},N]\}$. It is easy to see that
$S'_{1},S'_{2},\ldots,S'_{k'}\subseteq [1,\frac{N}{2}]$ is an instance
of \PINS, which by Lemma~\ref{lemma:nested} can be preprocessed using
$\Oh(N\log N+\sum_{i}|S'_{i}|)$ space. Then we recursively repeat the
construction on $S_{1},S_{2},\ldots,S_{k'}\subseteq
[1,\frac{N}{2}]$. Similarly, we recursively repeat the construction on
$S_{k'+1}\cap [\frac{N}{2}+1,N], S_{k'+2}\cap [\frac{N}{2}+1,N],
\ldots, S_{k}\cap [\frac{N}{2}+1,N]$, but here we additionally
subtract $\frac{N}{2}$ from the elements as to ensure that the sets we
recurse on are from $[1,\frac{N}{2}]$.

The total size of all sets we repeat the construction on is
$\sum_{i\leq
  k'}|S_{i}|-|S'_{i}|+\sum_{k'<i}|S_{i}|=\sum_{i}|S_{i}|-\sum_{i}|S'_{i}|$
and the sum of the sizes of their universes is $N$. The recursion
depth is $\log N$ and the sizes of the universes at every \emph{level}
of the recursion sum up to $N$, hence the total space taken by all
instances of \PINS is $\Oh(N\log^{2}N+\sum_{i}|S_{i}|)$. Note that
each \PINS subproblem stores $\Oh(1)$ extra information, indicating
its offsets within the \PISNS instance, so that a query can be easily
remapped to the subproblem.

To locate the predecessor of $x\in [1,N]$ in $S_{i}$, we first must
identify the relevant subproblem instance of \PINS. This can be done
by storing, for every $i$, a single \emph{guide bit vector} of length
$\log N$, where the $j$-th bit is set iff the instance at the $j$-th
level contains at least one element originating from $S_{i}$. We
additionally store, for each $1$ bit in the guide bit vector, an
explicit pointer to the \PINS subproblem at that level.  Once the
level of the subproblem is known, a \texttt{select} query can be used
to find the correct pointer to follow.

Given $x$, we can use a $\texttt{select}$ query on its bits to
determine the level which contains $x$.  If the bit in the guide
bit vector corresponding to this level is $1$, then we search the \PINS
subproblem at this level.  If we find a predecessor, we are done.  In
the alternative case, suppose that the \PINS instance at the level of
$x$ does not contain the answer (i.e., there is no predecessor at that
level), or that the bit corresponding to this instance is a $0$ in the
guide bit vector.  In this case we can find the level containing the
predecessor using a single \texttt{rank} and \texttt{select} query on
the guide bit vector, and then query the \PINS instance.  Thus, the
query takes $\Oh(1)$ time overall.

The guide bit vectors occupy no more than $\Oh(\sum_i |S_i|)$ space,
since each bit vector occupies $\Oh(\log N)$ bits, and each set $S_i$
is non-empty. The additional pointers take at most $\Oh(\sum_i|S_i|)$
space, since we only store pointers to non-empty subproblems. 
\qed
\end{proof}

\section{\label{sec:reduction}Reduction to long substring retrieval}

In this section we reduce substring retrieval to long substring
retrieval, at the cost of a logarithmic factor increase in the space
bound.

\begin{lemma}
\label{lemma:reduction}
Suppose that any instance of long substring retrieval can be
preprocessed using $S(n)$ space so that a query can be answered in
$\Oh(1)$ time. Then, the general substring retrieval can be
preprocessed using $\Oh((S(n)+n)\log n)$ space so that a query can be
answered in $\Oh(1)$ time.
\end{lemma}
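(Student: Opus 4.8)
The plan is to split the text $w[1..n]$ at scales $2^0, 2^1, \ldots, 2^{\lceil\log n\rceil}$, reducing an arbitrary substring query to a \emph{long} substring query at the appropriate scale. Fix a scale $i$, put $\ell = 2^{i}$ (say, with a constant fudge factor so that $\frac{3}{4}\ell \le 2^{i} \le \ell$ is consistent with what the next section needs), and cover $w$ by the windows $W^{(i)}_{t} = w[(t-1)\lfloor \ell/2\rfloor + 1 \,..\, (t-1)\lfloor\ell/2\rfloor + \ell]$ for $t = 1, 2, \ldots$; these are $\Oh(n/\ell)$ windows of length exactly $\ell$ (pad the last one, or shift it left, so that all documents truly have the same length $\ell$ as the definition of long substring retrieval demands). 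The documents of the $i$-th instance are exactly these windows, so $\beta = \Oh(n/\ell)$, and by hypothesis this instance of long substring retrieval is preprocessed in $S(n)$ space with $\Oh(1)$ query time. Summing over the $\Oh(\log n)$ scales gives $\Oh(S(n)\log n)$ space for all the long-substring structures; the extra $\Oh(n\log n)$ term absorbs the global suffix tree $ST(w)$, the standard level-ancestor structure on it, and the $\Oh(1)$ bookkeeping per scale. I will also build, once, an LCA/suffix-tree structure on $w$ so that the string depth of the node for any $w[i..j]$, and the locus of $w[i..j]$ itself, can be recovered from its depth via the level-ancestor result quoted in the Preliminaries.

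Now the query. Given $i \le j$, let $m = j - i + 1$ be the length of the requested substring. If $m$ is smaller than some fixed constant we answer by a direct table lookup (there are only $\Oh(n)$ short substrings to precompute loci for, or we simply walk $\Oh(1)$ edges from the root). Otherwise let $s = \lceil \log_2 m\rceil$, so $2^{s-1} < m \le 2^{s}$, and work at scale $s$: because consecutive windows of length $\ell = 2^{s}$ overlap by $\lfloor\ell/2\rfloor \ge m$, the interval $[i,j]$ is contained in at least one window $W^{(s)}_{t}$ — concretely, take $t$ to be the unique window index with $(t-1)\lfloor\ell/2\rfloor < i \le t\lfloor\ell/2\rfloor$, computed by a single division. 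Inside that document, $w[i..j]$ appears as $W^{(s)}_{t}[i'..j']$ with $j' - i' + 1 = m > 2^{s-1} \ge \frac{3}{4}\ell$ (again up to the fixed fudge factor built into $\ell$), so it is a legitimate \emph{long} substring query, answered in $\Oh(1)$ time. The long-substring structure returns the locus of this string \emph{in $T = GST$ of the scale-$s$ windows}; but the string it represents is exactly $w[i..j]$, so its string depth is $m$, and we feed $m$ together with the leaf for the suffix $w[i..n]$ of $ST(w)$ into a weighted-ancestor-free lookup: the level-ancestor structure on $ST(w)$ retrieves the node (explicit or the correct edge) at string depth $m$ on the path to that leaf. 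Hence the answer is produced in $\Oh(1)$ time.

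The one place that needs care — and the main obstacle — is making the window cover genuinely valid: the definition of long substring retrieval insists all documents have the \emph{same} length $\ell$ and that the query length be at least $\frac{3}{4}\ell$, so I must choose the scale parameters so that \emph{every} substring of length $m$ is captured, at \emph{some} scale, as a fragment of relative length $\ge \frac{3}{4}$ of a full-length window, while the windows still tile $w$ with only $\Oh(n/\ell)$ of them. Using overlap exactly $\lfloor \ell/2 \rfloor$ and the scale $\ell = 2^{\lceil \log m\rceil}$ gives relative length in $(\tfrac12, 1]$, which is not quite $\ge \frac34$; the fix is to use a finer geometric progression of scales (ratio $\sqrt[3]{2}$, or simply scales $\ell \in \{ \lceil \tfrac{4}{3}\cdot 2^{k}\rceil : k = 0,1,\ldots\}$ together with $\ell \in \{2^{k}\}$) so that for every $m$ there is a scale with $\frac{3}{4}\ell \le m \le \ell$; this at most doubles the number of scales and hence only changes the hidden constant in $\Oh((S(n)+n)\log n)$. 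Handling the boundary windows at the two ends of $w$ (where a full window of length $\ell$ would run off the string) is routine: extend $w$ on both sides with $\ell$ copies of the sentinel $\$$ before building the scale-$\ell$ windows, which affects none of the loci of genuine substrings $w[i..j]$. Everything else — the division to find $t$, the offset remapping $i \mapsto i'$, translating the returned depth — is $\Oh(1)$ arithmetic.
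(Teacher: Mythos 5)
There are two genuine gaps. The more serious one is the final step, where you translate the answer from the generalised suffix tree back to $ST(w)$: you propose to ``feed $m$ together with the leaf for the suffix $w[i..n]$'' into ``the level-ancestor structure on $ST(w)$'' to retrieve ``the node at string depth $m$ on the path to that leaf''. This is circular: retrieving the ancestor of a leaf at a prescribed \emph{string depth} is exactly the weighted ancestor query that the whole construction is supposed to implement. The level-ancestor structure quoted in the Preliminaries takes the \emph{unweighted} depth (number of explicit ancestors), which you do not know; the string depth $m=j-i+1$ is already given by the query, so if this step worked no reduction would be needed at all. The paper instead stores, at every explicit node of the GST, a pointer to the corresponding node of $ST(w)$, and handles an implicit answer $v$ lying on a GST edge $(v',v'')$ by observing that every substring of $w$ of length at most $(\alpha-1)2^{k}>|s|$ occurs inside some document, so $ST(w)$ has no explicit node strictly between $v'$ and $v$; the answer is therefore the edge of $ST(w)$ leaving $v'$ towards (the image of) $v''$, precomputed for each $v''$. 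Some argument of this kind is unavoidable, because the GST and $ST(w)$ do not have the same explicit nodes (for instance, document ends are GST leaves but are generally implicit in $ST(w)$).

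The second gap is in the covering. With windows of length $\ell=2^{s}$ spaced $\lfloor\ell/2\rfloor$ apart, an interval of length $m$ is guaranteed to lie inside some window only when $m\leq\ell/2$ (roughly), whereas you route queries with $2^{s-1}<m\leq 2^{s}$ to this scale; your inequality ``overlap $\lfloor\ell/2\rfloor\ge m$'' is backwards, since $m>2^{s-1}$. Your proposed fix (finer geometric scales) repairs the relative-length requirement $m\ge\frac{3}{4}\ell$ but not containment: as long as $m=\ell$ is permitted, no fixed positive window spacing suffices. One must keep $m\le c\ell$ for some constant $c<1$ \emph{and} space the windows at most $(1-c)\ell$ apart; the paper does this by taking documents to be $\alpha$ consecutive blocks of length $2^{k}$ for $\alpha\in\{8,\dots,15\}$, starting at every block boundary, and routing a query to the instance with $(\alpha-2)2^{k}\le|s|<(\alpha-1)2^{k}$, so that the one-block slack at each end guarantees containment. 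This part is fixable within your framework, but as written the queried substring need not be contained in any document of the instance you query.
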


\begin{proof}
To preprocess $w$ for the general substring retrieval we construct a
constant number of instances of long substring retrieval for each
$k=0,1,\ldots,\log n$.  For every such $k$, the instances roughly
correspond to a decomposition of $w$ into documents of length around
$\ell=2^{k}$. For every $k=0,1,\ldots,\log n$ we first split $w$ into
disjoint substrings of length $2^{k}$, i.e., $b_{1}=w[1..2^{k}],
b_{2}=w[2^{k}+1..2^{k+1}], \ldots$, padding the last substring if
necessary. Then for every $\alpha=8,9,\ldots,15$ we create an instance
of long substring retrieval with $\ell=\alpha 2^{k}$ by taking the
documents to be of the form $w'_{i}=b_{i}b_{i+1}..b_{i+\alpha-1}$ for
$i=1,2,\ldots$, i.e., every possible contiguous sequence of $\alpha$
full blocks. Note that these documents are \emph{not} disjoint
substrings of $w$.  There are $\Oh(n/\ell)$ such documents.

Now consider a query concerning a substring $s$. We want to select $k$
and $\alpha\in\{8,9,\ldots,15\}$ such that $(\alpha-2) 2^{k}\leq |s| <
(\alpha-1)2^{k}$ and access the corresponding instance. This is always
possible, as we can compute $k$ such that $2^{k+3}\leq |s|<2^{k+4}$,
then $|s|-2^{k+3}<2^{k+3}$, so we can choose $\alpha'<8$ such that
$\alpha' 2^{k} \leq |s|-2^{k+3} < (\alpha'+1)2^{k}$, and finally take
$\alpha=8+\alpha'$.  Let $b_{i}$ be the block where $s$ starts, then
$s$ is fully within $b_{i}b_{i+1}..b_{i+\alpha-1}$, so we can query
the instance with the substring of $w'_{i}$-th document equal to $s$.
For the answer to be correct, we must guarantee that
$|s|\geq\frac{3}{4}\alpha 2^{k}$, but this follows from $\alpha\geq
8$. Hence using long substring retrieval we get the node $v$
corresponding to $s$ in the the generalised suffix tree built for all
$w'_{i}$. 

For every explicit node of the generalised suffix tree we store a
pointer to the corresponding node of the suffix tree of whole string
$w$.  If the node $v$ corresponding to $s$ in the generalised suffix
tree is explicit, then following its pointer gives us the final
answer. If $v$ is implicit, then it lies on an edge between two
explicit nodes, $v'$ and $v''$, corresponding to strings of length
strictly smaller and strictly larger than $|s|$, respectively.  Both
$v'$ and $v''$ are explicit in the suffix tree. Now we observe that
any substring of $w$, of length at most $(\alpha-1)2^{k}>|s|$, is a
substring of some document $w'_{i}$. Hence, if we look at the suffix
tree, then there are no explicit nodes between $v'$ and $v$. So, the
answer that we are seeking is determined by the topmost descendant of
$v'$ in the suffix tree with $v''$ in its subtree, which can be
preprocessed and stored for every $v''$. Therefore, we can compute the
answer in $\Oh(1)$ time, and the additional preprocessing space is
$\Oh(n)$, plus that taken by the $\Oh(\log n)$ instances of the long
substring retrieval problem.
\qed
\end{proof}

\section{\label{sec:long}Solving long substring retrieval}

In this section we develop an efficient solution for long substring
retrieval.  Recall that the goal in long substring retrieval is to
preprocess a generalised suffix tree built for documents
$w_{1},w_{2},\ldots,w_{\beta}$, where $\beta=\Oh(n/\ell)$ and
$|w_{i}|=\ell$, as to retrieve the node corresponding to any
$w_{k}[i..j]$ of length at least $\frac{3}{4}\ell$.

\subsection{Handling active nodes}

Let $T$ be the generalised suffix tree built for
$w_{1},w_{2},\ldots,w_\beta$, where $\beta=\Oh(n/\ell)$.  While the goal is to
preprocess the whole bottom part of $T$, i.e., all nodes at string
depth at least $\frac{3}{4}\ell$, we will first show how to preprocess
just some of these nodes. A node of $T$ is {\it active} if its string
depth is at least $\frac{3}{4}\ell$ and additionally there are no two
different leaves corresponding to the suffixes of the same document in
its subtree. Notice that if $v$ is not active, neither is its parent, hence
we can find a collection of nodes $v_{1},v_{2},\ldots,v_{s}$ such that
a node is active iff it is a (not necessarily proper) descendant of some $v_{i}$. The active
part of $T$, i.e., the forest consisting of all subtrees rooted at
$v_{1},v_{2},\ldots,v_{s}$, will be called $T'$.
We have the following property of active nodes.

\begin{lemma}
\label{lemma:suffix active}
If a non-root node $v$ is not active, then $\link(v)$ is not active either.
\end{lemma}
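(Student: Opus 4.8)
The plan is to argue directly, distinguishing the two ways in which a node can fail to be active. Since $v$ is not the root, it represents a non-empty string, which we write as $as$ with $a$ a single character; then $\link(v)$ is the (possibly implicit) node representing $s$, and its string depth is exactly one less than that of $v$. I now consider separately the two reasons why $v$ might not be active.

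If the string depth of $v$ is less than $\frac{3}{4}\ell$, then the string depth of $\link(v)$ is strictly smaller still, so $\link(v)$ is not active and we are done.

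Otherwise, the subtree of $v$ contains two distinct leaves corresponding to suffixes of a single document $w_{k}$, say the leaves for $w_{k}[i..]$ and $w_{k}[j..]$ with $i\neq j$. A leaf lies in the subtree of $v$ precisely when the string $as$ of $v$ is a prefix of the string spelled by that leaf, i.e.\ of $w_{k}[i..]\$_{k}$ and of $w_{k}[j..]\$_{k}$. Since $v$'s subtree contains two distinct leaves, $as$ cannot equal $w_{k}[i..]\$_{k}$ (otherwise $v$ would be a leaf, with no other leaf below it), so $as$ is a \emph{proper} prefix of $w_{k}[i..]\$_{k}$, and likewise of $w_{k}[j..]\$_{k}$; in particular $a=w_{k}[i]=w_{k}[j]$ and $s$ is a prefix of both $w_{k}[i+1..]\$_{k}$ and $w_{k}[j+1..]\$_{k}$ (here $i+1$ and $j+1$ are legitimate starting positions of suffixes of $w_{k}$, possibly the empty suffix, whose leaf is the one spelled $\$_{k}$). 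Consequently the leaves for $w_{k}[i+1..]$ and $w_{k}[j+1..]$ both lie in the subtree of $\link(v)$, and they are distinct leaves corresponding to suffixes of the same document $w_{k}$ because $i+1\neq j+1$. Hence $\link(v)$ is not active either. The boundary situation in which $s$ is empty and $\link(v)$ is the root is harmless: the root has string depth $0<\frac{3}{4}\ell$ and is therefore never active.

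The one point I would treat with care is the bookkeeping around the terminal symbols $\$_{k}$ and the empty suffix---checking that shifting the starting position of a suffix by one still yields a genuine leaf of the generalised suffix tree, and that the two shifted leaves stay distinct. Once that is settled the statement is essentially immediate, since a suffix link just deletes the first character, so it maps the configuration ``two equal-document suffixes sharing a common prefix'' to the same configuration one character shallower.
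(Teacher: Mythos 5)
Your proof is correct and follows essentially the same two-case argument as the paper: the string-depth case is handled identically, and in the second case the paper likewise observes that the suffix links of the two offending leaves are distinct leaves for suffixes of the same document lying in the subtree of $\link(v)$. Your extra bookkeeping about the terminal symbols and the root is a harmless elaboration of the same idea.
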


\begin{proof}
Let $v$ be a non-active node. There are two possible reasons for $v$
not being active. The first is that its string depth is smaller than
$\frac{3}{4}\ell$, in which case the string depth of $\link(v)$ is
also smaller and hence $\link(v)$ is not active either.  The second is
that the subtree rooted at $v$ contains two different leaves $u_1$ and
$u_2$ corresponding to the suffixes of the same document. However, in
this case $\link(u_1)$ and $\link(u_2)$ are two different leaves
corresponding to the suffixes of the same document and inside the
subtree rooted at $\link(v)$, hence $\link(v)$ is not active.  
\qed
\end{proof}

We will preprocess $T$ so that we can retrieve the node corresponding
to a substring $w_{k}[i..j]$ assuming that it is active. First we
observe that it is not difficult to detect that the corresponding node
is not active: for every leaf of $T$ we can compute and store the
string depth of its active ancestor that has the smallest string
depth. Then we can take the leaf corresponding to $w_{k}[i..]$ and check
if it has an active ancestor with a sufficiently large string depth.

We partition $T'$ into disjoint paths using a variant of the centroid
path decomposition.  First define the \emph{level} of a node $v\in T$
to be the unique integer $k$ such that the number of leaves in the
subtree of $v$ belongs to $[2^{k},2^{k+1})$.  From the definition, the
  level of any ancestor of $v$ is at least as large as the level of
  $v$, and any node has at most one child at the same level.  We also
  need the following properties of the levels, which are specific to
  the tree $T$. While we can afford to store the
  level only at the explicit nodes, all properties hold also for
  implicit nodes, and clearly the level of an implicit node can be
  determined by looking at its first explicit descendant.  Based on
  these definitions, we prove the following two lemmas.

\begin{lemma}
\label{lemma:suffix level}
The level of $\link(v)$ is at least as large as the level of $v$.
\end{lemma}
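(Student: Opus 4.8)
The plan is to relate the leaves in the subtree of $v$ to leaves in the subtree of $\link(v)$ via the suffix link, and argue that $\link$ cannot decrease the number of leaves below a node. Recall that $v$ corresponds to some string $as$ with $a$ a single character, and $\link(v)$ corresponds to $s$. First I would observe that every leaf $u$ in the subtree of $v$ corresponds to a suffix $w_k[i..]$ that has $as$ as a prefix; then $w_k[i+1..]$ has $s$ as a prefix, so the leaf corresponding to $w_k[i+1..]$ lies in the subtree of $\link(v)$. This gives a map from leaves below $v$ to leaves below $\link(v)$. The key point is that this map is \emph{injective}: two distinct suffixes $w_k[i..]$ and $w_{k'}[i'..]$ map to $w_k[i+1..]$ and $w_{k'}[i'+1..]$, which coincide only if $(k,i)=(k',i')$ — here I would note that even if $i=1$ and the suffix is the whole document $w_k$, the suffix $w_k[2..]$ is still a genuine leaf of the generalised suffix tree (every non-empty suffix of every document is a leaf), so the map is well-defined in all cases, except possibly when the leaf corresponds to the length-one suffix consisting only of the sentinel, which cannot happen for a non-root $v$ whose subtree we are considering. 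Hence the number of leaves in the subtree of $\link(v)$ is at least the number of leaves in the subtree of $v$.

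Having established that the leaf count does not decrease under $\link$, the conclusion is immediate: if the subtree of $v$ has a number of leaves in $[2^k,2^{k+1})$, then the subtree of $\link(v)$ has at least $2^k$ leaves, so its level is at least $k$. I would then remark that the same argument applies verbatim to implicit nodes, since the string $as \mapsto s$ correspondence and the leaf-counting argument do not rely on $v$ being explicit; this justifies the sentence preceding the lemma that all level properties hold for implicit nodes as well.

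The main obstacle I anticipate is the bookkeeping around boundary cases of the suffix-link map: ensuring that for every leaf below $v$ the "shifted" suffix is still a leaf of $T$ (not something that fell off the tree), and that distinct leaves never collapse. Both are handled by the standard fact that in a generalised suffix tree $GST(w_1,\dots,w_\beta)$ built over $w_i\$_i$, the suffix link is defined and injective on the set of all non-root nodes, and leaves map to leaves precisely because the $\$_i$ are distinct — so the only suffix with no image is the one-character suffix $\$_i$ itself, whose leaf hangs directly off the root and therefore never occurs strictly inside the subtree of a non-root node $v$. Once that is said cleanly, the counting argument is a one-line inequality and the lemma follows.
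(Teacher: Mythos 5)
Your proposal is correct and follows essentially the same route as the paper's proof: map each leaf in the subtree of $v$ to a leaf in the subtree of $\link(v)$ via the suffix link and count. The paper states this more tersely (leaving the injectivity of the map implicit), whereas you spell out why distinct suffixes remain distinct after dropping the first character and handle the sentinel boundary case; this is the same argument, just more carefully justified.
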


\begin{proof}
If the level of $v$ is $k$, then the subtree rooted at $v$ contains at
least $2^k$ different leaves $u_1,u_2,\ldots$. The nodes
$\link(u_1),\link(u_2),\ldots$ are also leaves and belong to the
subtree rooted at $\link(v)$, hence the level of $\link(v)$ is at
least $k$.
\qed
\end{proof}

\begin{lemma}
\label{lemma:one incoming}
Suppose $u$ and $v$ are two nodes at the same level, such that $u$ is
neither an ancestor or descendant of $v$.  If $\link(u)$ is an
ancestor of $\link(v)$, then its level is larger than the levels
of $u$ and $v$.
\end{lemma}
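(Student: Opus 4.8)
The plan is to argue by contradiction: assume that the level of $\link(u)$ equals the common level $k$ of $u$ and $v$ (it cannot be smaller, by Lemma~\ref{lemma:suffix level}), and derive a contradiction. Since $\link(u)$ is an ancestor of $\link(v)$ and also has level $k$, while $v$ has level $k$ and $\link(v)$ has level $\geq k$, all the nodes $\link(u)$, $\link(v)$, $v$ lie on a single chain of nodes of level exactly $k$. Recall that any node has at most one child at its own level, so the nodes of level $k$ form a disjoint union of (root-to-leaf-directed) paths; hence $\link(u)$, $\link(v)$, $v$ all lie on one such path. The key point I want to extract is that, because $\link(u)$ is a strict ancestor of $\link(v)$ and both have level $k$, the subtree of $\link(u)$ has strictly more leaves ``close'' to it than $2^{k}$ worth — more precisely I will count leaves of the generalized suffix tree carefully.

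First I would set up the leaf-counting. Let $m$ be the number of distinct documents having a suffix in the subtree of $u$; since $u$ has level $k$, $m \in [2^{k}, 2^{k+1})$. Because $u$ is not an ancestor or descendant of $v$, and $v$ also has level $k$, there are $m' \in [2^k, 2^{k+1})$ distinct documents with a suffix in the subtree of $v$, and crucially the leaf sets in the subtrees of $u$ and of $v$ are disjoint. Now apply suffix links: the suffix-link images of the leaves under $u$ are leaves inside the subtree of $\link(u)$ (this is exactly the mechanism used in Lemma~\ref{lemma:suffix level}), and likewise the suffix-link images of the leaves under $v$ are leaves inside the subtree of $\link(v) \subseteq \link(u)$. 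The map $\link(\cdot)$ is injective on leaves (each leaf represents a distinct suffix of a distinct document, and dropping the first character keeps them distinct, provided we are not at the root — here $u,v$ are at level $k$ deep in the bottom part of $T$, so their leaves represent suffixes of length $\geq \tfrac{3}{4}\ell - O(1)$, and stripping one character cannot collide). Therefore the subtree of $\link(u)$ contains at least (images of $u$'s leaves) $\cup$ (images of $v$'s leaves), two disjoint families; but these need to be counted as \emph{distinct documents}, so I must check that the documents contributing to $u$'s subtree and to $v$'s subtree, after suffix-linking, still give distinct documents in $\link(u)$'s subtree — they do, because the suffix link does not change which document a suffix belongs to. Hence the number of distinct documents with a suffix in the subtree of $\link(u)$ is at least the number with a suffix in the subtree of $v$ (namely $\geq 2^{k}$) \emph{plus} at least one more coming from $u$'s side that is not already counted — and in fact I will show it is $\geq 2^{k+1}$, contradicting that $\link(u)$ has level $k$.

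The main obstacle is making that last counting step tight: I must verify that the leaves arising from $u$ and the leaves arising from $v$, after applying suffix links, really land in the subtree of $\link(u)$ as $\geq 2^{k+1}$ \emph{distinct documents}, not merely $\geq 2^{k}+1$. The cleanest route I see is: the $\geq 2^{k}$ documents under $v$ map (via suffix links on their relevant suffixes) to $\geq 2^{k}$ distinct documents with a suffix under $\link(v)$, hence under $\link(u)$; separately, $\link(u)$ being a \emph{strict} ancestor of $\link(v)$ means there is at least one leaf in $\link(u)$'s subtree outside $\link(v)$'s subtree — and I want that leaf to come from a document not among those $2^{k}$. Here I would use that $u$ is not a descendant of $v$: pick a leaf $\ell_u$ under $u$; its suffix-link image lies under $\link(u)$; if it lay under $\link(v)$ as well, I can push this back through suffix links (using that suffix links are a forest on the explicit nodes and respect ancestry in the reverse direction up to the branching structure) to conclude $u$ and $v$ are ancestor/descendant related, contradiction. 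This gives a document under $\link(u)$ but not under $\link(v)$, pushing the count to $\geq 2^{k}+1$; iterating the argument over \emph{all} of $u$'s documents, none of whose images can fall under $\link(v)$, yields the full $\geq 2^{k}+2^{k}=2^{k+1}$. I expect the delicate bookkeeping to be precisely this ``images of $u$'s leaves avoid $\link(v)$'s subtree'' claim, and I would isolate it as the crux of the proof.
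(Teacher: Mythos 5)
There is a genuine gap, and it sits exactly where you flagged it. The claim you isolate as the crux --- that the suffix-link images of the leaves under $u$ avoid the subtree of $\link(v)$ --- is false in general and, fortunately, not needed. If $u$ represents $as$ and $v$ represents $bs'$, then $\link(u)$ being an ancestor of $\link(v)$ means $s$ is a prefix of $s'$; a leaf $asx\$_i$ under $u$ maps to $sx\$_i$, which may perfectly well have $s'$ as a prefix and hence lie under $\link(v)$. So the iteration ``none of $u$'s images can fall under $\link(v)$'' cannot be carried out, and your count stalls at $\geq 2^k+1$ rather than $2^{k+1}$. A second, independent problem is that you count \emph{distinct documents} rather than leaves: the level is defined by the number of leaves in the subtree, and moreover the same document can contribute one suffix under $u$ and a different suffix under $v$, so even after suffix-linking the two document sets need not be disjoint --- your remark that ``the suffix link does not change which document a suffix belongs to'' does not give disjointness across the two families.

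The repair is simpler than what you attempt. Count leaves: the subtree of $u$ contains $\geq 2^k$ leaves $u_1,u_2,\ldots$ and the subtree of $v$ contains $\geq 2^k$ leaves $v_1,v_2,\ldots$, and all of $\link(u_i)$, $\link(v_j)$ are leaves in the subtree of $\link(u)$. Within each family the images are distinct (the suffix link is injective on leaves). Across families, suppose $\link(u_i)=\link(v_j)$. Since each document ends with a unique separator, equality of the two leaves forces $u_i$ and $v_j$ to be suffixes of the \emph{same} document; they are distinct leaves (the subtrees of $u$ and $v$ are disjoint), hence distinct suffixes of that document, hence of different lengths --- but then $\link(u_i)$ and $\link(v_j)$ also have different string depths, a contradiction. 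So all $2^{k+1}$ images are pairwise distinct leaves under $\link(u)$, and its level exceeds $k$. No claim about where the images land relative to $\link(v)$ is required.
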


\begin{proof}
Let the level of $u$ and $v$ be $k$. Then the subtree of $u$ contains
at least $2^k$ different leaves $u_1,u_2,\ldots$, and, similarly, the
subtree of $v$ contains at least $2^k$ different leaves
$v_1,v_2,\ldots$.  Then all $\link(u_1),\link(u_2),\ldots$ and
$\link(v_1),\link(v_2),\ldots$ are leaves belonging to the subtree
rooted at $\link(u)$. Because all $u_i$ are different, so are all
$\link(u_i)$. Similarly, because $v_i$ are different, so are
$\link(v_i)$. Now we claim that it cannot happen that
$\link(u_i)=\link(v_j)$.  If it were the case, from the assumption
about the unique separators terminating every document we would have
that $u_i$ and $v_j$ correspond to two different suffixes of the same
document. But because $u$ is neither an ancestor or descendant of $v$
it must be that $u_{i}\neq v_{j}$, so then the string depths of
$u_{i}$ and $v_{j}$ are different, and so are the string depths of
$\link(u_i)$ and $\link(v_j)$.  Hence all $u_i$ and $v_j$ are
different, so $\link(u)$ contains at least $2^{k+1}$ different leaves in its
subtree, hence it level is larger than $k$.  
\qed
\end{proof}

From now on we focus on a fixed level $k$.  Since no node has two
children at the same level, the active nodes at level $k$ create a set
of disjoint paths, $p_{1},p_{2},\ldots,p_{s}$, such that no node in
$p_{i}$ is an ancestor of a node in $p_{j}$ if $i\neq j$.  Every such
path starts at an explicit node which has no child at level $k$ and
continues up, terminating either just before another explicit node at
a level larger than $k$ or an implicit node at string depth exactly
$\frac{3}{4}\ell$. We say that path $p_{i}$ \emph{points to path}
$p_{j}$, denoted $p_{i}\rightarrow p_{j}$, if there is a node $u\in
p_{i}$ and a node $v\in p_{j}$ such that $\link(u)=v$. This is a valid
definition, and furthermore any path is pointed to by at most one
other path, as shown in the following lemma.

\begin{lemma}
\label{lemma:degree one}
Relation $\rightarrow$ has the following properties
\begin{inparaenum}[(a)]
\item if $p_{i}\rightarrow p_{j}$ then $i\neq j$,
\item if $p_{i}\rightarrow p_{j}$ and $p_{i}\rightarrow p_{j'}$ then $j=j'$,
\item if $p_{i}\rightarrow p_{j}$ and $p_{i'}\rightarrow p_{j}$ then $i=i'$.
\end{inparaenum}
\end{lemma}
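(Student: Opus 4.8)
The plan is to verify the three properties by tracking the levels of the nodes involved and invoking the two preceding lemmas (Lemma~\ref{lemma:suffix level} and Lemma~\ref{lemma:one incoming}) together with the basic fact that suffix links are injective on explicit nodes.

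For part (a), suppose $p_i \rightarrow p_i$, so there exist $u, v \in p_i$ with $\link(u) = v$. Since $u$ and $v$ lie on the same path $p_i$, one is an ancestor of the other; I would argue that in fact $v$ must be a strict ancestor of $u$ (as $\link(u)$ represents a string shorter by one than the string at $u$, its string depth is strictly smaller, so it cannot be a descendant of $u$, and $u \neq v$). Now walking up from $u$ to $v$ along $p_i$, every node has level exactly $k$; but taking suffix links of these nodes and using Lemma~\ref{lemma:suffix level} repeatedly, plus the fact that at some step the suffix link leaves $p_i$ unless the path is infinite, I get a contradiction with the finiteness of the path or with the characterization of how $p_i$ terminates (it ends just below a node of higher level or at string depth $\frac{3}{4}\ell$). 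The cleanest route: iterating $\link$ from $v$ upward stays at level $k$ forever by Lemma~\ref{lemma:suffix level} monotonicity combined with staying on $p_i$, which is impossible since $p_i$ is finite and its top neighbor has level $>k$ or is at the boundary string depth. I expect this to be the step needing the most care, because it requires nailing down exactly why the chain of suffix links cannot stay trapped on $p_i$.

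For part (b), suppose $p_i \rightarrow p_j$ and $p_i \rightarrow p_{j'}$, witnessed by $u, u' \in p_i$ with $\link(u) = v \in p_j$ and $\link(u') = v' \in p_{j'}$. Without loss of generality $u'$ is an ancestor of $u$ on $p_i$, so $u = $ (some descendant), and since all of $p_i$ is at level $k$, the segment of $p_i$ between $u'$ and $u$ consists of level-$k$ nodes. Applying $\link$ to this whole segment gives a connected path of ancestors (suffix links send parent-child to an ancestor relation among their images — one should note $\link$ of the parent is an ancestor of $\link$ of the child), all of whose images have level $\geq k$ by Lemma~\ref{lemma:suffix level}; if any image had level $>k$ it would sit strictly above both $v$ and $v'$, but that contradicts $v, v'$ being the images of the endpoints; hence all images have level exactly $k$, so $v$ and $v'$ lie on a common monotone chain of level-$k$ nodes, which must be a single path, forcing $p_j = p_{j'}$.

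For part (c), suppose $p_i \rightarrow p_j$ and $p_{i'} \rightarrow p_j$ with $\link(u) = v$, $\link(u') = v'$, where $u \in p_i$, $u' \in p_{i'}$, $v, v' \in p_j$. If $i \neq i'$ then $u$ is neither an ancestor nor a descendant of $u'$ (they lie on disjoint paths of the same level, and no node of $p_i$ is an ancestor of a node of $p_{i'}$). Now $v$ and $v'$ are on $p_j$ so one is an ancestor of the other, say $v'$ is an ancestor of $v$; then $\link(u') = v'$ is an ancestor of $\link(u) = v$. Apply Lemma~\ref{lemma:one incoming} with the roles $u \mapsto u'$, $v \mapsto u$: since $\link(u')$ is an ancestor of $\link(u)$ and $u', u$ are incomparable at the same level, the level of $\link(u') = v'$ must be strictly larger than $k$. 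But $v' \in p_j$ has level exactly $k$, a contradiction. Hence $i = i'$. \qed
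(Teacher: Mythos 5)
Your parts (b) and (c) are correct and follow essentially the same route as the paper: for (b) you push the segment of $p_i$ between the two source nodes through the suffix link, use level monotonicity to pin all images at level $k$, and conclude they lie on one path (you should also note the images are active, being descendants of the active node $v'$ with string depth at least $\tfrac{3}{4}\ell$, since that is part of the definition of the paths); for (c) you apply Lemma~\ref{lemma:one incoming} exactly as the paper does.

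Part (a), however, has a genuine gap, and you correctly sensed it is the delicate step. Your proposed route --- iterate $\link$ starting from $v$, argue the chain stays on $p_i$ at level $k$ forever, and contradict finiteness of the path --- does not close. Lemma~\ref{lemma:suffix level} only gives that levels do not decrease under $\link$; it does not force the iterates to remain on $p_i$. The iterates $\link^{(t)}(u)$ form a chain of ancestors of $u$ with string depth dropping by one each step, so after finitely many steps they simply exit $p_i$ through its top, landing on a node of higher level or of string depth below $\tfrac{3}{4}\ell$; no contradiction arises from levels or finiteness alone. Indeed, configurations with $\link(u)$ an ancestor of $u$ do occur in suffix trees (they force $u$'s string to be a power of a single letter, as in $\texttt{a}^{n-1}\texttt{b}$), so any correct proof must use \emph{activeness}, not just levels. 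The paper's argument is: since $v=\link(u)$ is a proper ancestor of $u$ on the same path, pick any leaf $w_j[k..]$ in the subtree of $u$; then the leaf $w_j[k+1..]$ lies in the subtree of $\link(u)=v$, and $w_j[k..]$ also lies in the subtree of $v$ (as $v$ is an ancestor of $u$), so $v$'s subtree contains two leaves corresponding to suffixes of the same document, contradicting the fact that $v$ is active. You need to supply this (or an equivalent) argument to complete part (a).
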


\begin{proof}
$ $ 
\begin{enumerate}[label=(\alph*)]
\item Assume that $p_{i}\rightarrow p_{i}$. Then there are $u,v\in
  p_{i}$ such that $u=\link(v)$. Then clearly the string depth of $u$
  is larger than the string depth of $v$, and hence $u$ is a proper
  ancestor of $v$. The subtree rooted at $u$ contains at least one
  leaf corresponding to a suffix of some document, say
  $w_{j}[k..]$. Then the subtree rooted at $u$ contains the leaf
  corresponding to $w_{j}[k+1..]$, so the subtree rooted at $u$
  contains two leaves corresponding to different suffixes of the same
  document, so $u$ cannot be active, which is a contradiction.

\item Assume that $p_{i}\rightarrow p_{j}$ and $p_{i}\rightarrow
  p_{j'}$. Then there are nodes $u,u'\in p_{i}$, $v\in p_{j}$ and
  $v'\in p_{j'}$ such that $\link(u)=v$ and $\link(u')=v'$.  We can
  assume that $u$ is an ancestor of $u'$, and it implies that $v$ is
  an ancestor of $v'$. Now we observe that because $v$ is active and
  both $v$ and $v'$ are at level $k$, in fact all nodes on the path
  from $v'$ up to $v$ are active and at level $k$, so $j=j'$.

\item Assume that $p_{i}\rightarrow p_{j}$ and $p_{i'}\rightarrow
  p_{j}$. Then there are nodes $u\in p_{i}$, $u'\in p_{i'}$ and
  $v,v'\in p_{j}$ such that $\link(u)=v$ and $\link(u')=v'$.  We can
  assume that $v$ is an ancestor of $v'$.  Then if $i\neq i'$ we have
  that $u$ is neither an ancestor or descendant of $u'$ so we can
  apply Lemma~\ref{lemma:one incoming} to $u$, $u'$ and
  $v=\link(u)$. We get that the level of $v$ is larger than $k$, which
  is a contradiction.
\end{enumerate}
\qed
\end{proof}

Hence we can partition the whole set of paths of active nodes at level
$k$ into:

\begin{enumerate}
\item \emph{cycles of paths}, which are of the form
  $p_{i_{1}}\rightarrow p_{i_{2}}\rightarrow\ldots\rightarrow
  p_{i_{z}}\rightarrow p_{i_{1}}$, $z\geq 2$;
\item \emph{chains of paths}, which are of the form
  $p_{i_{1}}\rightarrow p_{i_{2}}\rightarrow\ldots\rightarrow
  p_{i_{z}}$, where $p_{i_{z}}$ doesn't point to any path and no path
  points to $p_{i_{1}}$.
\end{enumerate}

\begin{figure}[H]
\centering
\includegraphics[width=\textwidth]{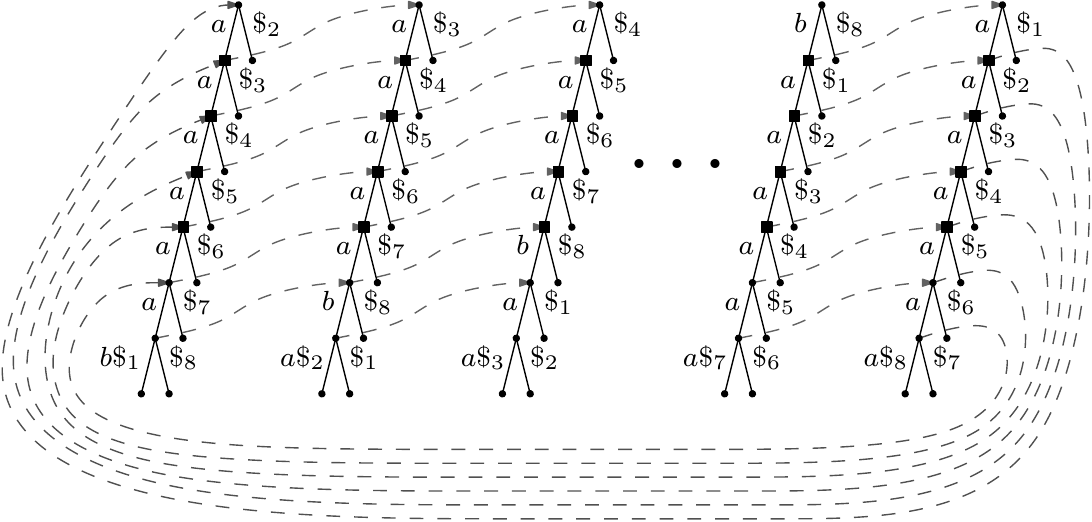}
\caption{\label{fig:cop}To visualise a cycle of paths, we can
  construct the generalised suffix tree for the set of documents $
  \{w_1, \ldots, w_\ell\}$ where $w_i = (\texttt{a}^{\ell-i}\texttt{ba}^{i-1})^4$, for
  any $\ell \ge 4$.  In the example we choose $\ell = 8$, and draw
  \emph{only the active nodes}.  Suffix links appear as dark gray
  dashed lines.  The square nodes, which are at level two, form a
  cycle of paths.}
\end{figure}

See Figure~\ref{fig:cop} for an example.  We will preprocess every
such cycle and chain separately using the solution for \PISNSlong from
the previous section, which allow us to answer a predecessor query on
any path in $\Oh(1)$ time, and bound the total space used by all the
instances of the solution.

Consider a single cycle or chain of paths, where for a cycle of paths
we additionally define $i_{z+1}=i_{1}$.  If $v\in p_{i_{j}}$ is an
explicit node, then $\link(v)$ is either an explicit node on
$p_{i_{j+1}}$, or its level is larger. Hence the sets of explicit
nodes on subsequent paths are, in a certain sense, nested. To
formalise this intuition, for every path $p_{i_{j}}$ we denote the
smallest and largest string depth of an (implicit or explicit) node by
$\ell_{j}$ and $r_{j}$, respectively. Then the range of this path is
an interval $U_{j}=[j+\ell_{j},j+r_{j}]$. Furthermore, we construct a
set $S_{j} \subseteq U_{j}$ corresponding to the path by including the
depth of every explicit node (increased by $j$ for technical
reasons). Now the ranges and the sets are nested in the following
sense.

\begin{lemma}
\label{lemma:monotonicity}
The following properties of $U_{j}$ and $S_{j}$ hold
\begin{inparaenum}[(a)]
\item $j+\ell_{j}\leq j+1+\ell_{j+1}$,
\item $j+r_{j}\leq j+1+r_{j+1}$,
\item $S_{j}\cap U_{j+1} \subseteq S_{j+1}$.
\end{inparaenum}
\end{lemma}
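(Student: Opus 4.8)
The plan is to prove all three parts from the structural facts already in place: a suffix link decreases string depth by exactly one and maps explicit nodes to explicit nodes; the level of a node is nonincreasing along tree edges toward the leaves; the active nodes form a downward‑closed set, so a descendant of an active node is active; and Lemma~\ref{lemma:suffix level}. Throughout, fix $j$ with $p_{i_{j}}\rightarrow p_{i_{j+1}}$ and fix witnesses $u\in p_{i_{j}}$, $v\in p_{i_{j+1}}$ with $\link(u)=v$. Two preliminary observations do most of the work. First, $\link$ commutes with the ancestor relation along a vertical path: any $e\in p_{i_{j}}$ is comparable to $u$, hence $\link(e)$ is comparable to $v$, with string depth one less than that of $e$. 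Second, every node whose string depth lies in $[\ell_{j+1},r_{j+1}]$ on the vertical extension of $p_{i_{j+1}}$ already lies on $p_{i_{j+1}}$: being a descendant of the active top it is active; being sandwiched between two level‑$k$ nodes it has level $k$; and a level‑$k$ node in the subtree of $v$ cannot leave the path from $v$ down to the bottom of $p_{i_{j+1}}$, since the side subtrees hang off through children of strictly smaller level.

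For (b) I would take $e_{\bot}$, the (explicit) bottom node of $p_{i_{j}}$, at string depth $r_{j}$. Then $\link(e_{\bot})$ is explicit at string depth $r_{j}-1$, and since $e_{\bot}$ is a descendant of $u$, $\link(e_{\bot})$ is a descendant of $v$; a descendant of the level‑$k$ node $v$ has level at most $k$, while Lemma~\ref{lemma:suffix level} gives level at least $k$, so its level is exactly $k$; being a descendant of the active node $v$ it is active. By the second preliminary observation it therefore lies on $p_{i_{j+1}}$, forcing $r_{j+1}\ge r_{j}-1$, i.e.\ $j+r_{j}\le (j+1)+r_{j+1}$.

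For (c) I would take an explicit $e\in p_{i_{j}}$ at string depth $d$ with $d+j\in U_{j+1}$, i.e.\ $d-1\in[\ell_{j+1},r_{j+1}]$. Then $\link(e)$ is explicit at string depth $d-1$ and comparable to $v$. If $\link(e)$ is a descendant of $v$, its level is at most $k$; if it is an ancestor of $v$, then since $d-1\ge\ell_{j+1}$ it is a descendant of the top node of $p_{i_{j+1}}$ (both lie on the root‑to‑$v$ path), so again its level is at most $k$. By Lemma~\ref{lemma:suffix level} its level is at least $k$, hence exactly $k$; being a descendant of an active node (either $v$ or the top of $p_{i_{j+1}}$) it is active; by the second preliminary observation it lies on $p_{i_{j+1}}$. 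As an explicit node of $p_{i_{j+1}}$ at string depth $d-1$ it contributes $(d-1)+(j+1)=d+j$ to $S_{j+1}$, so $S_{j}\cap U_{j+1}\subseteq S_{j+1}$.

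Part (a) is the delicate one and I expect it to be the main obstacle, since—unlike (b) and (c)—it cannot be obtained by pushing the top endpoint of $p_{i_{j}}$ along a suffix link. The plan is a proof by contradiction: suppose $\ell_{j+1}\le\ell_{j}-2$. If $\ell_{j}=\tfrac{3}{4}\ell$ this is impossible, as active nodes have string depth at least $\tfrac{3}{4}\ell$; so assume $\ell_{j}>\tfrac{3}{4}\ell$, and then the path‑termination rule says the node $g$ one step above $p_{i_{j}}$ (string depth $\ell_{j}-1$) is explicit with level larger than $k$, and $g$ is an ancestor of $u$. By (b), $r_{j+1}\ge r_{j}-1\ge\ell_{j}-1>\ell_{j}-2$, so $p_{i_{j+1}}$ contains a node $z$ at string depth $\ell_{j}-2$, which is an ancestor of $v$ (it is strictly above $v$ on the vertical path $p_{i_{j+1}}$). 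Now I would ``pull $z$ back'': writing the string of $u$ as $c$ followed by the string of $v$, the string consisting of $c$ followed by the string of $z$ is a prefix of the string of $u$ of length $\ell_{j}-1$, hence represented by an ancestor of $u$ at string depth $\ell_{j}-1$, which must be exactly $g$; therefore $\link(g)=z$. Lemma~\ref{lemma:suffix level} then gives that the level of $z$ is at least the level of $g$, which is larger than $k$—contradicting $z\in p_{i_{j+1}}$, whose nodes all have level $k$. Hence $\ell_{j+1}\ge\ell_{j}-1$, i.e.\ $j+\ell_{j}\le (j+1)+\ell_{j+1}$. The subtle point to get right here is the ``pull back'' step, namely that the prefix of length $\ell_{j}-1$ of the string of $u$ is a genuine (explicit) node and that it coincides with $g$, so that $\link(g)=z$ can be invoked.
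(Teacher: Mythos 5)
Your arguments for (b) and (c) are correct and essentially the paper's: push the relevant explicit node of $p_{i_j}$ through a suffix link, use Lemma~\ref{lemma:suffix level} plus the fact that descendants of active nodes are active and that levels are monotone along root-to-leaf paths to conclude the image lies on $p_{i_{j+1}}$. (Minor quibble: in (b) you invoke your ``second preliminary observation'' for a node at string depth $r_j-1$, which is exactly the depth you have not yet shown to be at most $r_{j+1}$; what you actually need, and what your justification of the observation really proves, is that an active level-$k$ descendant of a node of $p_{i_{j+1}}$ lies on $p_{i_{j+1}}$.)

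Part (a) contains a genuine gap. You assert that, when $\ell_j>\tfrac34\ell$, the node $g$ at string depth $\ell_j-1$ must be explicit \emph{with level larger than $k$}, citing the path-termination rule. But the paths are maximal vertical runs of \emph{active} level-$k$ nodes, so there is a third termination reason: $g$ can have level exactly $k$ and string depth at least $\tfrac34\ell$, yet fail to be active because its subtree (which is strictly larger than that of the topmost path node) contains two leaves of the same document. This happens, for instance, at the roots $v_1,\ldots,v_s$ of the active forest $T'$. In that case your final step --- ``the level of $z$ is at least the level of $g$, which is larger than $k$'' --- has no content, since the level of $g$ is $k$. This is precisely the case the paper's proof spends its last two sentences excluding. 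The good news is that your own machinery closes the hole: you have already established $\link(g)=z\in p_{i_{j+1}}$, so $z$ is active, and Lemma~\ref{lemma:suffix active} (a non-active node has a non-active suffix link image) rules out $g$ being inactive; combined with your level argument for the remaining case, the contradiction goes through. Note also that your route for (a) differs from the paper's: the paper shows that $\link(w)\in p_{i_{j+1}}$ forces $w$ itself to be active and at level $k$, contradicting the maximality of $p_{i_j}$ at its top, whereas you derive a level contradiction at $z$; both work once the missing case is handled.
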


\begin{proof}
$ $ 
\begin{enumerate}[label=(\alph*)]
\item Assume that $\ell_{j+1} < \ell_{j}-1$. Then we have $u\in p_{i_{j}}$
  at string depth $\ell_{j}$ such that its parent $w$ is either not
  active or at a higher level, and $v\in p_{i_{j+1}}$ at string depth
  strictly smaller than $\ell_{j}-1$. We also have $u'\in p_{i_{j}}$ and
  $v'\in p_{i_{j+1}}$ such that $\link(u')=v'$. Because $u$ is the topmost
  node in $p_{i_{j}}$, $u'$ is a (not necessarily proper) descendant of
  $u$, and $v'$ is a (not necessarily proper) descendant of $v$. Hence
  $\link(w)$ is a node at string depth $\ell_{j}-2$ which is an
  ancestor of $v'$. Because $p_{i_{j+1}}$ contains the ancestors of $v'$
  up to $v$, which are at depth strictly smaller than $\ell_{j}-1$, we
  have that $\link(w)\in p_{i_{j+1}}$.  So $\link(w)$ is active and at
  level $k$. By Lemma~\ref{lemma:suffix level} the level of $w$ is at
  most $k$. Combining this with the fact that the level of its child
  $u$ is $k$, we get that the level of $w$ is exactly $k$. Hence the
  only possible reason for $w$ not belonging to $p_{i_{j}}$ is that of not
  being active. It means that either the string depth of $w$ is too
  small or the subtree rooted there contains two leaves corresponding
  to suffixes of the same document. But the string depth of $w$ is at
  least $\ell_{j+1}$, and we have some (possibly different) active
  node at such string depth, which excludes the former possibility. To
  exclude the latter, we observe that $\link(w)$ would contain two
  such leaves, so it could not belong to $p_{i_{j+1}}$.

\item Assume that $r_{j+1} < r_{j}-1$. Then we have $u\in p_{i_{j}}$ at
  string depth $r_{j}$ such that $\link(u)$ does not belong to
  $p_{i_{j+1}}$. We also have $u'\in p_{i_{j}}$ and $v'\in p_{i_{j+1}}$ such that
  $\link(u')=v'$, and $u'$ is an (proper, as otherwise $\link(u)\in
  p_{i_{j+1}}$ immediately) ancestor of $u$.  Then $\link(u')$ is an
  ancestor of $\link(u)$, so $\link(u)$ is active and at level at most
  $k$. Hence the only possible reason for $\link(u)$ not belonging to
  $p_{i_{j+1}}$ is that its level is strictly smaller than $k$, but it
  cannot happen, as the subtree rooted at $u$ contains at least
  $2^{k}$ leaves, hence so does the subtree rooted at $\link(u)$.

\item Assume that we have $x\in S_{j}\cap U_{j+1}$ but $x\notin
  S_{j+1}$. Then there is an explicit node $u\in p_{i_{j}}$ at string
  depth $x-j$ such $x\in [j+1+\ell_{j+1},j+1+r_{j+1}]$ and there is no
  explicit node at depth $x-j-1$ on $p_{i_{j+1}}$. But $\link(u)$ clearly
  is such an explicit node.  

\end{enumerate}
\qed
\end{proof}

To execute a predecessor query on $p_{i_{j}}$, it is enough to perform
such a query on the corresponding set $S_{j}$, so we focus on
preprocessing all these sets.  It is clear that their total size is
small, as every element of $S_{j}$ corresponds to a different explicit node of
$T'$, but this is not enough to beat the $\Oh(\log\log n)$
bound on the query time. We need an insight into the structure
of all $S_{j}$ based on Lemma~\ref{lemma:monotonicity}.

Suppose we extend every range to the right by defining
$U'_{j}=[j+\ell_{j},z+r_{z}]$.  Then it still holds that $S_{j}\cap
U'_{j+1}\subseteq S_{j+1}$, but additionally all $U'_{j}$ end with the
same number. We will preprocess all $U'_{j}$ using the data structure
of Lemma~\ref{lemma:shrinking-nested}. Its space usage depends on the
total size of all $S_{j}$, which as already observed is small, but
also on the size of the largest extended range $U'_{1}$.  Even though
a single $|U'_{1}|$ might be big, the sum of all such values over all
cycles and chains of paths is at most $n/2^{k}$ by the following
lemmas based on charging arguments. We define the \emph{cost}
$\cost(p_{i_{j}})$ of a path $p_{i_{j}}$ as follows:

\begin{enumerate}
\item $\cost(p_{j})=r_{j}-r_{j-1}+1$ if $j>1$, 
\item for a cycle of paths $\cost(p_{j})=r_{1}-r_{z}+1$ if $j=1$.
\item for a chain of paths $\cost(p_{j})=r_{1}-\ell_{1}+1$ if $j=1$,
\end{enumerate}

Note that for a cycle of paths we arbitrarily fix one of the paths to
be $p_{i_{1}}$.  The following two lemmas bound the costs of individual
chains (or cycles) of paths, and the cost of all paths at level $k$,
respectively.

\begin{lemma}
\label{lemma:telescope}
For any chain of paths we have that $|U'_{1}|=\sum_{j}\cost(p_{i_{j}})$,
and for any cycle of paths $|U'_{1}|\leq2\sum_{j}\cost(p_{i_{j}})$.
\end{lemma}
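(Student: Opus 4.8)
The proof is a telescoping computation, so the bulk of the work is just expanding the definitions of $\cost(p_{i_j})$ and summing, with the monotonicity lemma (Lemma~\ref{lemma:monotonicity}(b)) supplying a sign/nonnegativity fact and, in the cycle case, a bound needed to control the one "wrap-around" term. Recall $U'_1 = [1 + \ell_1,\, z + r_z]$, so $|U'_1| = (z + r_z) - (1 + \ell_1) + 1 = (z-1) + (r_z - \ell_1)$. For a \emph{chain} of paths, I would write $r_z - \ell_1 = (r_1 - \ell_1) + \sum_{j=2}^{z}(r_j - r_{j-1})$, so that
\[
|U'_1| = (z-1) + (r_1 - \ell_1) + \sum_{j=2}^{z}(r_j - r_{j-1})
= (r_1 - \ell_1 + 1) + \sum_{j=2}^{z}(r_j - r_{j-1} + 1),
\]
where I have distributed the $z-1$ ones into the telescoped sum (one per term $j = 2, \ldots, z$). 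By definition this is exactly $\cost(p_{i_1}) + \sum_{j=2}^{z}\cost(p_{i_j}) = \sum_j \cost(p_{i_j})$, which is the first claim. (Here I reindex so that the path indices are $p_{i_1}, \ldots, p_{i_z}$ and the depths $\ell_j, r_j$ refer to $p_{i_j}$; this matches the convention in the definition of $\cost$.)

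For a \emph{cycle} of paths the telescope no longer closes on itself, since the relevant quantity is still $|U'_1| = (z-1) + (r_z - \ell_1)$ but the cost of $p_{i_1}$ is now $r_1 - r_z + 1$ rather than $r_1 - \ell_1 + 1$. Summing the cyclic costs gives
\[
\sum_{j}\cost(p_{i_j}) = (r_1 - r_z + 1) + \sum_{j=2}^{z}(r_j - r_{j-1} + 1)
= (z-1) + (r_z - r_z) \cdot 0 + \ldots
\]
— more carefully, the telescoping part $\sum_{j=2}^{z}(r_j - r_{j-1}) + (r_1 - r_z) = 0$, so $\sum_j \cost(p_{i_j}) = z$. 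I then need $|U'_1| = (z-1) + (r_z - \ell_1) \le 2z$. Since by Lemma~\ref{lemma:monotonicity}(a) the topmost string depths satisfy $\ell_1 \ge 1 + \ell_2 \ge \ldots$ going around — and in particular each $r_j \ge \ell_j$, and by part (b) the $r_j$ are (essentially) increasing around the cycle up to the off-by-one shifts — I can bound $r_z - \ell_1$. The cleanest route: each individual cost $\cost(p_{i_j}) \ge 1$ (the length interval $[\ell_j, r_j]$ of a path is nonempty, and the shifts in Lemma~\ref{lemma:monotonicity}(b) guarantee $r_j - r_{j-1} + 1 \ge 0$, hence $\ge 1$ after noting it is an integer $\ge 0$; actually the $+1$ already forces $\ge 1$ once $r_j \ge r_{j-1} - 1$... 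I should double-check the exact inequality, but Lemma~\ref{lemma:monotonicity}(b) gives $j + r_j \le j+1+r_{j+1}$, i.e. $r_{j+1} \ge r_j - 1$, so each telescoped term $r_j - r_{j-1} + 1 \ge 0$). Therefore $\sum_j \cost(p_{i_j}) = z \ge 1$, and separately I bound $|U'_1| = (z-1) + (r_z - \ell_1) \le (z-1) + z \le 2z$ provided $r_z - \ell_1 \le z$; this last bound follows because $r_z \le \ell_1 + (z - 1) + (\text{something small})$ by iterating part (a) and using $r_z \le \ell_z + (\text{path length})$... This is the one place requiring care.

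\textbf{Main obstacle.} The chain case is pure bookkeeping. The real (minor) subtlety is the cycle case: I must pin down why $r_z - \ell_1 \le z$ (or whatever slack is needed to reach the factor $2$). The intuition is that around a cycle the top-of-path depths $\ell_j$ and bottom-of-path depths $r_j$ each change by roughly $\pm 1$ per step (Lemma~\ref{lemma:monotonicity}(a),(b)), so after $z$ steps they cannot have drifted by more than $O(z)$, while $\sum \cost = z$ already accounts for the length-$1$-per-path contribution; the factor $2$ absorbs the fact that the extended range $U'_1$ double-counts both the "width in depth" and the "number of paths." I would make this precise by showing $r_z \le r_1 + (z-1)$ (iterate part (b) backwards: $r_1 \le 2 + r_2 \le \ldots$, wait — I need the correct direction) and $\ell_1 \ge r_1 - (\text{length of } p_{i_1}) $, combining to get $r_z - \ell_1 \le (z-1) + (\text{length of } p_{i_1}) \le (z - 1) + \cost(p_{i_1}) \le z - 1 + z$, hence $|U'_1| \le 2z - 2 + (z-1) $... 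I'll need to tune constants, but the factor $2$ is comfortably enough slack, so I expect no essential difficulty — only careful index management around the cyclic wrap.
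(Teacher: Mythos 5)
Your chain case is essentially the paper's telescoping argument and is fine, apart from an off-by-one: $|U'_1| = (z+r_z)-(1+\ell_1)+1 = z + (r_z-\ell_1)$, not $(z-1)+(r_z-\ell_1)$; your next line silently corrects this and lands on the correct identity $\sum_j\cost(p_{i_j}) = z + r_z - \ell_1 = |U'_1|$.

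The cycle case, however, has a genuine gap. You correctly compute $\sum_j\cost(p_{i_j}) = z$, so you must show $|U'_1| = z + r_z - \ell_1 \le 2z$. Using $r_z \le r_1+1$ (the wrap-around instance of Lemma~\ref{lemma:monotonicity}(b), with $i_{z+1}=i_1$), this reduces to $r_1 - \ell_1 < z$: \emph{the string-depth span of a single path in a cycle of $z$ paths is less than $z$}. This is a combinatorial fact about the tree, not a consequence of the monotonicity lemma, and none of your proposed substitutes delivers it. In particular, ``length of $p_{i_1}$ $\le \cost(p_{i_1})$'' is false for cycles, where $\cost(p_{i_1}) = r_1 - r_z + 1$ can be as small as $0$ and is unrelated to $r_1-\ell_1+1$; and your own arithmetic ends at $3z-3$, which exceeds the required $2z$ --- the factor $2$ is not comfortable slack but essentially tight. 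The paper proves $r_1 - \ell_1 < z$ by a suffix-link argument: if $r_1 - \ell_1 \ge z$, take the node $u$ of $p_{i_1}$ at string depth $r_1$ and follow $z$ suffix links; by the definition of $\rightarrow$ and Lemma~\ref{lemma:degree one} this returns to $p_{i_1}$ at string depth $r_1 - z \ge \ell_1$, so the topmost node of $p_{i_1}$ would have two leaves corresponding to suffixes of the same document in its subtree, contradicting the activeness of the nodes on the path. This activeness argument is the missing idea; without it the lemma cannot be closed.
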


\begin{proof}
For a chain of path we have that
$\sum_{j}\cost(p_{i_{j}})=r_{1}-\ell_{1}+1+\sum_{j>1}( r_{j}-r_{j-1}+1)$,
which telescopes leaving only $z+r_{z}-\ell_{1}=|U'_{1}|$. Now
consider a cycle of paths. We have that $r_{1}-\ell_{1} < z$ by the
following argument.  If the inequality does not hold, then we could
take the node $u$ in $p_{i_{1}}$ at string depth $r_{1}$ and, following
the suffix links $\link(u), \link(\link(u)),\ldots$, return to $p_{i_{1}}$
after exactly $z$ steps.  This would imply that the topmost node of
$p_{i_{1}}$ contains two leaves corresponding to suffixes of the same
document. Using this inequality, we get that $|U'_{1}|\leq
z+\sum_{j>1}(r_{j}-r_{j-1}+1)$, and then because
$r_{1}-r_{z}+\sum_{j>1}(r_{j}-r_{j-1})=0$ we get
$\cost(p_{i_{1}})+\sum_{j>1}\cost(p_{i_{j}})=z$, so finally $|U'_{1}|\leq
2\sum_{j}c(p_{i_{j}})$.
\qed
\end{proof}

\begin{lemma}
\label{lemma:sum of increases}
The sum of costs of all paths at level $k$ is at most $3n/2^{k}$.
\end{lemma}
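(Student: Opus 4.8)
The plan is to bound the total cost of all paths at level $k$ by charging each unit of cost to a distinct leaf of the generalised suffix tree $T$, or more precisely to a distinct position in one of the documents, so that each position is charged a bounded number of times. Recall $\sum_{i}\cost(p_{i_{j}})=|U'_{1}|$ for a chain and at most $2|U'_{1}|$ for a cycle by Lemma~\ref{lemma:telescope}, but it is cleaner to charge the raw costs directly. First I would recall that every active node at level $k$ has at least $2^{k}$ distinct leaves (hence distinct document endpoints) in its subtree; in particular each path $p_{i_{j}}$, via its topmost node, "owns" at least $2^{k}$ document positions in the combinatorial sense. The goal is to argue that summing $r_{j}-r_{j-1}+1$ (and the two special first-path terms) over all paths, over all chains and cycles at this level, totals at most $3n/2^{k}$.

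The key step is to interpret $\cost(p_{i_{j}}) = r_{j}-r_{j-1}+1$ as follows: the ``$+1$'' terms, one per path, are charged using the fact that there are at most $n/2^{k}$ paths at level $k$ in total — indeed each path's topmost node has $\ge 2^{k}$ distinct leaves in its subtree, these leaf sets are pairwise disjoint across paths at the same level (two distinct paths at the same level are incomparable, and by the argument in Lemma~\ref{lemma:one incoming} their leaf sets, after applying suffix links, stay distinct), and there are only $n$ leaves total, so the number of paths is at most $n/2^{k}$. That accounts for $n/2^{k}$. For the telescoping terms $r_{j}-r_{j-1}$: within a single chain these telescope to $r_{z}-\ell_{1}$ (after combining with the first-path term $r_{1}-\ell_{1}+1$), which is at most $|U'_{1}|$, the length of the extended range, which in turn counts a contiguous block of string depths; and within a cycle we similarly get something bounded by $z$ (using $r_{1}-\ell_{1}<z$ from the proof of Lemma~\ref{lemma:telescope}). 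So it remains to bound $\sum |U'_{1}|$ over chains plus $\sum z$ over cycles by $2n/2^{k}$.

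For that final bound I would again charge to leaves. For a chain, $|U'_{1}| = z + r_{z} - \ell_{1}$; the portion $r_{z}-\ell_{1}$ is the string-depth span covered by the union of the paths, and each unit of string depth in a path $p_{i_{j}}$ corresponds to a node whose subtree contains $\ge 2^{k}$ leaves — but these nodes along a single path are ancestor-comparable so their leaf sets are not disjoint, which means I cannot charge string-depth units directly. Instead I would charge the span $r_{z}-\ell_{1}$ against the *bottom* of the chain: the topmost node of $p_{i_{1}}$ has $\ge 2^{k}$ leaves, and following suffix links down the chain reaches topmost nodes of $p_{i_{2}},\dots$ whose leaf sets, by the incomparability/distinctness argument, are disjoint from each other and from $p_{i_{1}}$'s — so $z$ chains' worth of $2^{k}$-sized disjoint leaf blocks, giving $z \le$ (number of leaves used)$/2^{k}$, and the ``$r_{z}-\ell_{1}$'' part needs a separate argument showing it is dominated by the number of *distinct* documents appearing, which is again at most $n/2^{k}$-scaled. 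The main obstacle, and the part I expect to require the most care, is precisely this: showing that the string-depth spans $r_{z}-\ell_{1}$ (chains) add up across all chains and cycles at level $k$ without over-counting — the natural charging is to the number of distinct leaves below the topmost nodes, but one must verify these leaf sets are genuinely disjoint across different chains/cycles at the same level (this follows from Lemmas~\ref{lemma:suffix level}, \ref{lemma:one incoming} and the unique-terminator property, much as in Lemma~\ref{lemma:degree one}), and that the remaining ``span'' contribution is itself bounded by the document count, yielding the constant $3$.
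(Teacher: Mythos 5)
There is a genuine gap, and it sits exactly where you flagged it. Your accounting of the ``$+1$'' terms (equivalently, the number of paths at level $k$ being at most $n/2^{k}$, because the bottommost nodes of distinct paths at the same level have disjoint subtrees each containing at least $2^{k}$ distinct leaves) is fine and matches one third of the paper's budget. But the telescoped spans cannot be ``dominated by the number of distinct documents appearing'': a single chain's span $r_{z}-\ell_{1}$ can be as large as $\Theta(\ell)$ while the number of documents below it is only about $2^{k}$, so no count of leaves, documents, or paths will bound $\sum(r_{z}-\ell_{1})$. The missing idea is to charge cost at the granularity of individual \emph{letters} of the documents: for a path $p$ with string-depth range $[\ell,r]$ and bottommost node $u$, take $2^{k}$ of the distinct suffixes $w_{a_{i}}[b_{i}..]$ below $u$ and charge $1/2^{k}$ to each letter $w_{a_{i}}[b_{i}+\ell-1],\ldots,w_{a_{i}}[b_{i}+r-1]$ (for a non-first path, only the letters from position $b_{i}+r'$ on, where $r'$ is the top of the pointing path, plus one charge to the suffix itself). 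This makes each path collect exactly its cost, and the total pool of chargeable tokens (letters plus suffixes) is $\Oh(n)$.

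The second missing piece is the proof that no letter is charged twice, which is the crux and is not a disjointness-of-leaf-sets statement. Two different paths $p_{1},p_{2}$ at level $k$ can both contain nodes spelling substrings of the \emph{same} document ending at the \emph{same} position $j$, namely $w_{i}[k_{1}..j]$ and $w_{i}[k_{2}..j]$ with $k_{1}<k_{2}$; their leaf sets are disjoint, yet both would charge the letter $w_{i}[j]$. The paper rules this out by walking the suffix-link sequence $u_{1},\link(u_{1}),\ldots,u_{2}$: by Lemmas~\ref{lemma:suffix active} and~\ref{lemma:suffix level} the next-to-last node is active and at level $k$, so its path points to $p_{2}$, which contradicts $p_{2}$ being first on its chain (in the first-path case), or forces the pointing path to equal $p'$ by Lemma~\ref{lemma:degree one} and shows the letter lies above $r'$ and hence was never charged by $p_{2}$ (in the non-first case). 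Without this letter-level double-charging argument the bound does not close, so the proposal as written does not establish the lemma.
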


\begin{proof}
We separately bound the total cost of all paths which are first on
their respective chains, and the total cost of all the remaining ones.

Consider a path $p$ such that there is no path $p'$ for which
$p'\rightarrow p$.  Let $\ell$ and $r$ be the smallest and largest
string depth of an (implicit or explicit) node on $p$, and let $u$ be
the node corresponding to the latter.  As the level of $u$ is $k$, it
has at least $2^{k}$ different leaves $v_{1},v_{2},\ldots$ in its
subtree. Say that $v_{i}$ corresponds to $w_{a_{i}}[b_{i}..]$. Because
$u$ is active, all $a_{i}$ are different. We distribute the cost of
$p$, which is $r-\ell+1$, among the first $2^{k}$ of these suffixes by
charging $1/2^{k}$ to every letter $w_{a_{i}}[b_{i}+\ell-1],
w_{a_{i}}[b_{i}+\ell+1],\ldots,w_{a_{i}}[b_{i}+r-1]$.  Now we claim
that during this process no letter will ever be charged twice.  Assume
otherwise, so some letter $w_{i}[j]$ is charged twice to pay for two
different paths $p_{1}$ and $p_{2}$. Then there is a node $u_{1}\in
p_{1}$ corresponding to some $w_{i}[k_{1}..j]$ and a node $u_{2}\in
p_{2}$ corresponding to some $w_{i}[k_{2}..j]$.  We can assume
$k_{1}<k_{2}$. Then we can construct a sequence of nodes
$u_{1},\link(u_{1}), \link(\link(u_{1})),\ldots,u_{2}$ such that the
first and the last node are both active and at level $k$. Hence from
Lemma~\ref{lemma:suffix active} and Lemma~\ref{lemma:suffix level}
also the next-to-last node in that sequence is active and at level
$k$, and so it belongs to some path $p'$. Then $p'\rightarrow p_{2}$,
which is a contradiction. Hence no letter is charged twice, and the
total cost is $n/2^{k}$.

Now consider paths $p$ and $p'$, such that $p'\rightarrow p$.  Let
$[\ell,r]$ and $[\ell',r']$ be the ranges of string depths on nodes on
$p$ and $p'$, respectively. The last node $u$ on $p$ has at least
$2^{k}$ leaves $v_{1},v_{2},\ldots$ in its subtree, and as in the
previous case we distribute the cost of $p$, which is $r-r'+1$, among
their corresponding suffixes $w_{a_{i}}[b_{i}..]$, but now we charge
both letters and whole suffixes. 
We charge $1/2^{k}$ to every suffix $w_{a_{i}}[b_{i}..]$ and
every letter $w_{a_{i}}[b_{i}+r'],
w_{a_{i}}[b_{i}+\ell+1],\ldots,w_{a_{i}}[b_{i}+r-1]$.  Assume that
some letter $w_{i}[j]$ is charged twice for two different paths
$p_{1}$ and $p_{2}$.  As in the previous case, it implies that there
is a node $u_{1}\in p_{1}$ corresponding to some $w_{i}[k_{1}..j]$ and
a node $u_{2}\in p_{2}$ corresponding to some $w_{i}[k_{2}..j]$, and
we can construct a sequence of nodes to find a node $u'\in p''$ such
that $\link(u')=u_{2}$. But then $p''\rightarrow p$, which by
Lemma~\ref{lemma:degree one} implies that $p'=p''$, and then $r' >
|w_{i}[k_{2}-1..j]|$, so $w_{i}[j]$ is not charged by $p_{2}$. The
total number of letters and suffixes is $n$, making the total cost
$2n/2^{k}$. 
 \qed
\end{proof}

To locate the node corresponding to $w_{k}[i..j]$, we first retrieve
the leaf of $T$ corresponding to the whole $w_{k}[i..]$. Then we must
compute the level of the node corresponding to $w_{k}[i..j]$. More
precisely, we must find an ancestor $u$ of $v$ at level $k$ such that
the string depth of $u$ is at least $|w_{k}[i..j]|$, and furthermore
the level of the node corresponding to $w_{k}[i..j]$ is the same as
the level of $u$. This is enough to reduce the query to a weighted
predecessor search on a single path in one of our collections.
Computing $u$ can be done in $\Oh(1)$ using the following lemma, which
also removes the $\Oh(\log^{*}n)$ additive term from the query
complexity of~\cite{KopelotDynamicWeighted}.

\begin{lemma}
\label{lemma:marked predecessor}
A weighted tree on $n$ nodes, where some of the nodes are marked, but
any path from a leaf to the root contains at most $\Oh(\log n)$ marked
nodes, can be preprocessed in $\Oh(n)$ space so that predecessor
search can be performed among the marked ancestors of any node in
$\Oh(1)$ time.
\end{lemma}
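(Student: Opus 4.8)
The plan is to first reduce an arbitrary query to a query about a leaf, and then exploit that a leaf has only $\Oh(\log n)$ marked ancestors by feeding those weights into a single small‑universe predecessor structure; the whole difficulty is then one of space, not time.

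\emph{Step 1: reduce to leaves.} For every node $v$ precompute, by one DFS, its nearest marked ancestor $m(v)$, its depth, and an arbitrary leaf $\ell(v)$ in its subtree; this is $\Oh(n)$ space. Since $v$ lies on the root‑to‑$\ell(v)$ path, the marked ancestors of $v$ are exactly the marked ancestors of $\ell(v)$ of depth at most $\mathrm{depth}(v)$, and along this list weights and depths increase together. Hence, if $P$ is the predecessor of $q$ among \emph{all} marked ancestors of $\ell(v)$, then the predecessor of $q$ among the marked ancestors of $v$ equals $P$ when $\mathrm{depth}(P)\le \mathrm{depth}(v)$, and equals $m(v)$ otherwise — in the latter case $m(v)$ precedes $P$ in the common sorted order, so $\mathrm{weight}(m(v))\le q$ and $m(v)$ is indeed the sought predecessor. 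So it suffices to answer predecessor queries among the marked ancestors of leaves.

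\emph{Step 2: a small structure per leaf.} Contract the tree to its \emph{marked tree} $T_M$ (nodes: the root and the marked nodes; parent $=$ nearest marked strict ancestor). By hypothesis $T_M$ has depth $\Oh(\log n)$, so every node of $T_M$ has only $\Oh(\log n)=\Oh(\polylog n)$ ancestors, i.e.\ the sorted list of weights we must search always has polylogarithmic length. A single atomic heap answers a predecessor query on such a list in $\Oh(1)$ time, so if we could afford one atomic heap per node we would be done; the remaining task is to fit everything into $\Oh(n)$ space, which we cannot do naively because $\sum_{u}\mathrm{depth}_{T_M}(u)$ can be $\Theta(n\log n)$.

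\emph{Step 3: making it fit in $\Oh(n)$ space.} I see two routes. (i) Maintain one atomic heap under insertions and deletions while running an Euler tour of $T_M$, so that the version reached at a leaf $u$ stores exactly the marked‑ancestor weights of $u$; keeping this structure persistent costs only $\Oh(1)$ extra space per update because an atomic heap is a bounded‑in‑degree pointer structure (Driscoll--Sarnak--Sleator--Tarjan), giving $\Oh(n)$ total space and $\Oh(1)$ query. (ii) Alternatively, heavy‑path decompose $T_M$: a root‑to‑leaf path crosses only $\Oh(\log n)$ heavy paths, the marked ancestors of $u$ are the concatenation of prefixes of those heavy paths with strictly increasing weight ranges, and — crucially — \emph{each heavy path, being a path, contains at most $\Oh(\log n)$ marked nodes by hypothesis}, so one atomic heap per heavy path handles the ``within a heavy path'' part in $\Oh(1)$ using only $\Oh(\sum_H |H|)=\Oh(n)$ total space; locating which heavy path holds the answer is a predecessor query among $\Oh(\log n)$ heavy‑path‑head weights, resolved by recursing the same scheme on the (still bounded‑depth) tree of heavy paths and bottoming out with an atomic heap once the instance is polylogarithmic. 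The main obstacle is exactly this step — attaining $\Oh(1)$ worst‑case query time and $\Oh(n)$ space simultaneously: a per‑node fusion structure is fast but too large, while any shared structure forces either persistence (where one must check the overhead is worst‑case $\Oh(1)$, not amortized) or a decomposition with an extra constant‑time step to find the relevant piece. In both routes the argument closes precisely because the assumption ``$\Oh(\log n)$ marked nodes on every root‑to‑leaf path'' keeps every sub‑instance small enough for an atomic heap, so no predecessor lower bound is ever invoked; this is also what removes the $\Oh(\log^{\star}n)$ term of~\cite{KopelotDynamicWeighted}.
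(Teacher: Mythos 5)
Your Steps 1 and 2 are sound (the reduction to leaves is valid because the weights are monotone along root-to-leaf paths), and you correctly isolate the crux: achieving $\Oh(n)$ space and $\Oh(1)$ worst-case query simultaneously. But neither of your two routes for Step 3 closes the argument. Route (i) invokes Driscoll--Sarnak--Sleator--Tarjan persistence on an atomic heap; that machinery applies to bounded-in-degree \emph{pointer} structures, whereas an atomic heap is a word-RAM structure built from packed arrays, word-level parallelism and table lookups, so neither the $\Oh(1)$ space per version nor $\Oh(1)$ worst-case access to an old version follows --- as you yourself concede. Route (ii) reduces a query to locating the correct heavy path, i.e.\ a predecessor search among the $\Oh(\log n)$ heavy-path heads on the root-to-$u$ path; but that is an instance of the \emph{same} problem on the tree of heavy paths, whose size equals the number of leaves of $T_M$ (not geometrically smaller --- consider a complete binary tree) and whose light depth is again $\Theta(\log n)$. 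The recursion therefore makes no measurable progress in either the size or the depth parameter, so it does not reach a polylogarithmic instance after $\Oh(1)$ levels, and the claimed $\Oh(1)$ query time does not follow.

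The paper resolves the crux with a different, non-recursive device: a micro--macro tree decomposition. It selects $\Oh(n/\log n)$ macro nodes, each storing a full atomic heap over its $\Oh(\log n)$ marked ancestors ($\Oh(n)$ words in total); a query at an arbitrary node is forwarded to its nearest macro ancestor, and the residual search inside the size-$\Oh(\log n)$ micro tree is handled by one shared atomic heap per micro tree over \emph{all} depths occurring in it, together with a single machine word $B[v]$ per node whose $i$-th bit records whether the node at the $i$-th depth is a marked ancestor of $v$ inside that micro tree. This per-node bitmask, combined with a most-significant-set-bit computation after the shared predecessor search, is the ingredient your sketch is missing: it lets all nodes of a micro tree share one predecessor structure while still answering in $\Oh(1)$ worst case, which is exactly the space/time tension you identified but did not resolve.
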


\begin{proof}
Constructing a structure of size $\Oh(n\log n)$ is straightforward: we
store the string depths of at most $\log n$ marked ancestors of every
node in an atomic heap.  To decrease the space, we use the micro macro
tree decomposition~\cite{GabowTarjan}.  We choose $\Oh(n/\log n)$
macro nodes of the tree such that removing them leaves us with a
collection of micro trees of size at most $\log n$ each.  For every
macro node we construct an atomic heap storing the depths of all its
marked ancestors. This allows us to perform a search at any macro node
in $\Oh(1)$.  However, it might be the case that we want to perform a
search at a non-macro node. In such a case we first lookup its first
macro ancestor and do the search there. This gives us the correct
answer unless it lies within the same micro tree. Hence we need to
implement a $\Oh(1)$ time search within every micro tree.

For every micro tree we construct an atomic heap containing the string
depths of all nodes inside the micro tree. Let the sorted list of
these string depths be $d_1 \leq d_2 \leq \ldots \leq d_k$, where
$k\leq \log n$ (note that we keep all duplicates in the list). For
every node $v$ of the micro tree we store a single machine word $B[v]$
with the $i$-th bit set iff the marked ancestor of $v$ at string depth
$d_i$, if any, belongs to the same micro tree. Now to perform a search
at $v$ with a string depth $d$, we first find the predecessor of $d$
in the list. This takes $\Oh(1)$ time using the atomic heap and
assuming that every element in the atomic heap stores the position of
its first occurrence in the list. If the predecessor is $d_i$, we find
the largest $i'\leq i$ such that $B[v]$ has the $i'$-th bit set. Then
$d_{i'}$ is the predecessor of $d$ among the string depths of all
marked ancestors of $v$ inside its micro tree.  Additionally, because
the list contains duplicates, $i'$ uniquely determines the marked
ancestor corresponding to the answer. 
\qed
\end{proof}

To apply the above lemma, we mark the explicit nodes of $T$ such that
the level of their parent is strictly larger. As the maximum level is
$\log n$, the maximum number of marked nodes on any path from the leaf
is also $\log n$. Hence we have reduced the query to performing a
predecessor search among all ancestors on the same level of an
explicit node $u$. At every explicit node we store a pointer to its
path, and for every path we store a pointer to its cycle or chain of
paths.

\subsection{Handling the remaining nodes}

The method from last subsection allows us to retrieve the node $v$
corresponding to $w_{i}[j..k]$ if it belongs to $T'$, or detect that
we need to look at the non-active part.  If $v$ does not belong to
$T'$, even though $|w_{i}[j..k]|\geq\frac{3}{4}\ell$, then its subtree
contains two different leaves originating from the same document.  But
then these leaves correspond to some $w_{i'}[j'..]$ and
$w_{i'}[j''..]$ with $j'\neq j''$, and furthermore $w_{i}[j..k]$ is a
prefix of both these suffixes. It follows that the period of
$w_{i}[j..k]$ is at most $\frac{1}{4}\ell$. We preprocess all such
$w_i[j..k]$ separately.

As discussed in Section~\ref{sec:overview}, if the period of
$w_{i}[j..k]$ of length at least $\frac{3}{4}\ell$ is at most
$\frac{1}{4}\ell$, then the middle part of $w_{i}$, namely
$w_{i}[\frac{1}{4}\ell..\frac{3}{4}\ell]$, is periodic. For every
$w_{i}$ we compute the period $p$ of its middle part, and if
$p\leq\frac{1}{4}\ell$ we also find the lexicographically smallest
cyclic rotation of the corresponding string $r$ of length $p$ such
that the middle part is a substring of $r^{\infty}$.  We group
together all $w_{i}$ with the same $r$ and preprocess the subtree of
$T$ corresponding to their substrings fully contained in the periodic
part separately.

For a string $r$, let $T_r$ be the subtree of $T$ corresponding to all
substrings of $r^\infty$ of length at least $\frac{1}{2}\ell$. First
we show that any such $T_r$ can be efficiently preprocessed for
weighted level ancestor queries. In this case the input to a query is
a substring of $r^\infty$ specified by its length and starting
position.  Without loss of generality the starting position is less
than $|r|$.  

\begin{lemma}
\label{lemma:periodic preprocessing}
Let $r$ be any primitive string of length at most $\frac{1}{4}\ell$,
and $s$ be the number of explicit nodes in $T_r$ at string depth at
least $\frac{1}{2}\ell$.  $T_r$ can be preprocessed using
$\Oh(|r|\log|r|+s)$ space, so that, in $\Oh(1)$ time, the node
corresponding to any substring of $r^\infty$ of length at least
$\frac{3}{4}\ell$ can be retrieved.
\end{lemma}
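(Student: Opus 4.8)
The plan is to reduce the weighted level ancestor problem on $T_r$ to an instance of \PISNS, mirroring the strategy used for active nodes but exploiting the cyclic structure of $r^{\infty}$. First I would observe that a substring of $r^{\infty}$ of length $\geq\frac{1}{2}\ell$ is determined by its starting position modulo $|r|$ and its length, so there are only $|r|$ essentially different ``long suffixes'' of $r^{\infty}$ to worry about, namely $r^{\infty}[i..]$ for $i=0,1,\ldots,|r|-1$. Each of these corresponds to a path (or more precisely, to a chain of explicit nodes) going down into $T_r$ once the string depth exceeds $\frac{1}{2}\ell$, and the suffix link sends the node at string depth $d$ on the $i$-th such path to the node at string depth $d-1$ on the $((i-1)\bmod|r|)$-th path. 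I would then mimic the development for cycles of paths: define for each $i$ the range $U_i$ of string depths $\geq\frac12\ell$ realized by explicit nodes ``above'' offset $i$, and a set $S_i\subseteq U_i$ recording (shifted by $i$) the string depths of the explicit nodes of $T_r$ on that path. The periodicity lemma guarantees that two such paths that ``merge'' must stay merged, so exactly the nesting property $S_i\cap U_{i+1}\subseteq S_{i+1}$ of Lemma~\ref{lemma:monotonicity}(c) holds, and the ranges shrink analogously — this is the direct analogue of what was proved for a single cycle of paths, with $|r|$ playing the role of the cycle length $z$.

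Next I would bound the size of the resulting \PISNS instance. The total size $\sum_i|S_i|$ is $\Oh(s)$ since every element of every $S_i$ corresponds to a distinct explicit node of $T_r$ at string depth $\geq\frac12\ell$ (each explicit node lies on exactly one such path, because its string depth together with the offset of its incoming edge pins down the relevant $i$). For the universe, I would extend all ranges to the right to a common endpoint, exactly as before, obtaining extended ranges whose common length is $\Oh(|r|)$: the telescoping/charging argument of Lemma~\ref{lemma:telescope} gives that the largest extended range has length at most $2\sum_j\cost(p_{i_j})$, and the cost sum here is $\Oh(|r|)$ because following suffix links around the cycle of $|r|$ offsets returns to the start after exactly $|r|$ steps (again by periodicity, as in the proof that $r_1-\ell_1<z$). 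Feeding this into Lemma~\ref{lemma:shrinking-nested} with $N=\Oh(|r|)$ yields a structure of size $\Oh(|r|\log^2|r|+s)$ — I would need to check whether the statement wants $\Oh(|r|\log|r|+s)$, in which case one appeals to \PINS (Lemma~\ref{lemma:nested}) directly, since the relevant collection may actually be a single chain after extension, or one absorbs the extra log factor into the later space-reduction step; I suspect one log factor is fine and will be squeezed out in Section~\ref{sec:space-reduction}.

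The remaining ingredient is answering a query in $\Oh(1)$ time: given a substring of $r^{\infty}$ by (starting position $i<|r|$, length $m\geq\frac34\ell$), compute its locus. I would first normalize the starting position modulo $|r|$, then perform a predecessor search for $i+m$ in $S_i$ using the \PISNS structure; the returned rank identifies the lowest explicit ancestor, and a level-ancestor query (available by the preliminaries, since we know the depth) converts this to the actual node, handling the implicit case by returning the appropriate edge. The condition length $\geq\frac34\ell$ (rather than merely $\geq\frac12\ell$) is what guarantees the queried node actually lies in the preprocessed region $T_r$, since $\frac34\ell>\frac12\ell$ with room to spare for the period $\leq\frac14\ell$.

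The main obstacle I anticipate is not any single step but getting the bookkeeping of offsets exactly right: because $r^{\infty}$ is doubly infinite and genuinely periodic, the paths for the $|r|$ offsets form a single cycle under suffix links, so one must fix an arbitrary starting offset, argue (via $|r|\leq\frac14\ell$ and the fact that the periodic part has length $\geq\frac12\ell\geq 2|r|$, so the periodicity lemma applies) that no explicit node is shared by two offsets, and verify the monotonicity of the ranges in this cyclic setting — essentially re-running the cycle-of-paths analysis of Section~\ref{sec:long} in a cleaner, self-contained periodic instance. Once that correspondence is nailed down, the space and time bounds follow immediately from Lemmas~\ref{lemma:shrinking-nested}, \ref{lemma:telescope}, and the level-ancestor machinery.
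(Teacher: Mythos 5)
There is a genuine gap in your space bound, and it sits exactly where you flag uncertainty: the claim that the universe of the \PISNS instance can be taken to be $\Oh(|r|)$. Your argument imports the telescoping bound of Lemma~\ref{lemma:telescope}, whose key step is the inequality $r_1-\ell_1<z$; but that inequality is proved there using the fact that the paths consist of \emph{active} nodes, i.e.\ no subtree contains two suffixes of the same document, so that following $z$ suffix links cannot return you to the same path. In $T_r$ the situation is precisely the opposite: these are the non-active nodes, following $|r|$ suffix links from a node on the path of offset $i$ \emph{does} land back on that same path (one period lower), and the range of string depths realized on a single offset's path has length up to $\Theta(\ell)$, not $\Oh(|r|)$ (take $r=\texttt{a}$ and $w_i=\texttt{a}^{\ell}$: one path, $\Theta(\ell)$ explicit nodes, universe $\Theta(\ell)$, while the lemma promises $\Oh(1+s)$ up to the $\log$). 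So your reduction yields a \PISNS instance with universe $N=\Theta(\ell)$ and space $\Oh(\ell\log^2\ell+s)$ per Lyndon word, which does not give $\Oh(|r|\log|r|+s)$ and would break the final summation over all $r$. Moreover, even with a small universe, your sets are only shrinking-nested and cyclically so, which is why you are forced into Lemma~\ref{lemma:shrinking-nested} and its extra $\log$ factor.

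The missing idea is to \emph{unroll the cycle by cutting each path into fragments aligned with the period}. The paper extends every offset path to correspond to $r[i..|r|]r^{\alpha}$ and splits it into a prefix plus $\beta$ fragments, each corresponding to one repetition of $r$, choosing $\beta$ (using $|r|\le\frac14\ell$) so that every answer node lies in some fragment and every fragment node has string depth at least $\frac12\ell$. Defining $S_{i,j}\subseteq[1,|r|]$ as the in-fragment depths of explicit nodes in fragment $j$ of offset $i$, the suffix link (which decreases string depth by exactly $1$ and shifts the offset by $1$, wrapping from $i=|r|$ into repetition $j+1$) shows these sets form a single \emph{fully nested} chain $S_{1,1}\subseteq S_{2,1}\subseteq\cdots\subseteq S_{|r|,1}\subseteq S_{1,2}\subseteq\cdots$ over universe $[1,|r|]$. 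Plain \PINS (Lemma~\ref{lemma:nested}) then costs $\Oh(|r|\log|r|+\sum_{i,j}|S_{i,j}|)=\Oh(|r|\log|r|+s)$, since each element is a distinct explicit node at string depth at least $\frac12\ell$. A query uses a division to find the right fragment (checking $S_{i,j}$ and $S_{i,j-1}$) and stored minimum depths per fragment to recover the ancestor's depth. Your query outline and your disjointness observation (no two offset paths share an explicit node at depth $\ge 2|r|$, by primitivity of $r$) are fine; it is the fragmentation step that makes the universe $|r|$ and the nesting exact, and without it the bound in the statement is not attained.
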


\begin{proof}
For every cyclic shift $r'=r[i..|r|]r[1..i-1]$ of $r$, where
$i=1,2,\ldots,|r|$, we denote by $p_i$ the longest path in $T_r$
corresponding to a prefix of $r'^{\infty}$. Hence the whole $T_r$
can be seen as a union of these $|r|$ paths. The paths are not
necessarily disjoint, but no two of them share a common prefix
of length $2|r|$, as otherwise the periodicity lemma would imply
that $r$ is actually not primitive. We conceptually
extend every $p_{i}$ so that it corresponds to $r[i..|r|]r^{\alpha}$,
with the same value of $\alpha$ for every $i$. As we are working with
a compacted trie anyway, such an extension doesn't increase the size of
the problem.

We split every $p_{i}$ into a prefix corresponding to $r[i..|r|]r^{\alpha-\beta}$
and then $\beta$ fragments corresponding to the remaining $\beta$
repetitions of $r$. The value of $\beta$ is chosen so that the following
two conditions hold:
\begin{enumerate}
\item any explicit node that we could possibly be required to return as an
answer belongs to one of these $\beta$ fragments,
\item any explicit node belonging to one of these $\beta$ fragments
is at string depth at least $\frac{1}{2}\ell$.
\end{enumerate}
The conditions translate to $|r|-i+1+|r|(\alpha-\beta)\leq\frac{3}{4}\ell$
and $|r|-i+1+|r|(\alpha-\beta)\geq\frac{1}{2}\ell$, respectively. As
$|r|\leq\frac{1}{4}\ell$, such a $\beta$ always exists.

For every $p_i$ we define the sets $S_{i,j}$ for $j=1,2,\ldots,\beta$ describing
the string depths of all explicit nodes belonging to the fragments of the path.
More precisely, $S_{i,j}$ contains $d$ iff the node corresponding to
$r[i..|r|]r^{\alpha-j}r[1..d]$ is explicit. Then $S_{i,j}\subseteq
S_{i+1,j}$ if $i<|r|$ and $S_{i,j}\subseteq S_{1,j+1}$ if $i=|r|$,
since if $v$ is an explicit node corresponding to some
$r[i..|r|]r^{\alpha-j}r[1..d]$, then following its suffix link leads
us to an explicit node corresponding to either
$r[i+1..|r|]r^{\alpha-j}r[1..d]$ or $r[1..|r|]r^{\alpha-j-1}r[1..d]$.

Now if the answer to a query is a node at string
depth at least $\frac{3}{4}\ell$, it belongs to some $S_{i,j}$,
hence we need to preprocess all these sets for
predecessor queries. As the sets are nested, by Lemma~\ref{lemma:nested},
it requires only $\Oh(|r|\log|r|+\sum_{i,j}S_{i,j})$ words of space, which is $\Oh(|r|\log|r|+s)$,
where $s$ is at most the number of explicit nodes in $T_r$ at string depth at
least $\frac{1}{2}r$. Before we find the predecessor in the appropriate $S_{i,j}$,
we need to determine which set to query. For this we separately
store for every $i$ a pointer to the explicit node
with the largest string depth on $p_i$.
Then to determine the node corresponding to a substring $r[i..|r|]rr..$, we
first look at its length to check if the explicit node with the largest string
depth on $p_i$ should be returned. If not, with a simple division we can
determine which $S_{i,j}$ should be considered, so that the answer is either
there, or in $S_{i,j-1}$. Then locating the predecessor in these two sets
gives us the string depth of the node that we
should return. To determine its depth, we store for every nonempty $S_{i,j}$
the smallest depth of an explicit node there. These values are stored in a separate
array for every $i$. By adding the rank of the predecessor in the appropriate
set to the smallest depth of an explicit node stored there, we get the final
depth.
For every cyclic shift $r'=r[i..|r|]r[1..i-1]$ of $r$, where
$i=1,2,\ldots,|r|$, we denote by $p_i$ the longest path in $T_r$
corresponding to a prefix of $r'^{\infty}$. Hence the whole $T_r$
can be seen as a union of these $|r|$ paths. The paths are not
necessarily disjoint, but no two of them share a common prefix
of length $2|r|$, as otherwise the periodicity lemma would imply
that $r$ is actually not primitive. We conceptually
extend every $p_{i}$ so that it corresponds to $r[i..|r|]r^{\alpha}$,
with the same value of $\alpha$ for every $i$. As we are working with
a compacted trie anyway, such an extension doesn't increase the size of
the problem.

We split every $p_{i}$ into a prefix corresponding to $r[i..|r|]r^{\alpha-\beta}$
and then $\beta$ fragments corresponding to the remaining $\beta$
repetitions of $r$. The value of $\beta$ is chosen so that the following
two conditions hold:
\begin{enumerate}
\item any explicit node that we could possibly be required to return as an
answer belongs to one of these $\beta$ fragments,
\item any explicit node belonging to one of these $\beta$ fragments
is at string depth at least $\frac{1}{2}\ell$.
\end{enumerate}
The conditions translate to $|r|-i+1+|r|(\alpha-\beta)\leq\frac{3}{4}\ell$
and $|r|-i+1+|r|(\alpha-\beta)\geq\frac{1}{2}\ell$, respectively. As
$|r|\leq\frac{1}{4}\ell$, such a $\beta$ always exists.

For every $p_i$ we define the sets $S_{i,j}$ for $j=1,2,\ldots,\beta$ describing
the string depths of all explicit nodes belonging to the fragments of the path.
More precisely, $S_{i,j}$ contains $d$ iff the node corresponding to
$r[i..|r|]r^{\alpha-j}r[1..d]$ is explicit. Then $S_{i,j}\subseteq
S_{i+1,j}$ if $i<|r|$ and $S_{i,j}\subseteq S_{1,j+1}$ if $i=|r|$,
since if $v$ is an explicit node corresponding to some
$r[i..|r|]r^{\alpha-j}r[1..d]$, then following its suffix link leads
us to an explicit node corresponding to either
$r[i+1..|r|]r^{\alpha-j}r[1..d]$ or $r[1..|r|]r^{\alpha-j-1}r[1..d]$.

Now if the answer to a query is a node at string
depth at least $\frac{3}{4}\ell$, it belongs to some $S_{i,j}$,
hence we need to preprocess all these sets for
predecessor queries. As the sets are nested, by Lemma~\ref{lemma:nested},
it requires only $\Oh(|r|\log|r|+\sum_{i,j}S_{i,j})$ words of space, which is $\Oh(|r|\log|r|+s)$,
where $s$ is at most the number of explicit nodes in $T_r$ at string depth at
least $\frac{1}{2}r$. Before we find the predecessor in the appropriate $S_{i,j}$,
we need to determine which set to query. For this we separately
store for every $i$ a pointer to the explicit node
with the largest string depth on $p_i$.
Then to determine the node corresponding to a substring $r[i..|r|]rr..$, we
first look at its length to check if the explicit node with the largest string
depth on $p_i$ should be returned. If not, with a simple division we can
determine which $S_{i,j}$ should be considered, so that the answer is either
there, or in $S_{i,j-1}$. Then locating the predecessor in these two sets
gives us the string depth of the node that we
should return. To determine its depth, we store for every nonempty $S_{i,j}$
the smallest depth of an explicit node there. These values are stored in a separate
array for every $i$. By adding the rank of the predecessor in the appropriate
set to the smallest depth of an explicit node stored there, we get the final
depth.
\qed
\end{proof}

Now if $r$ and $r'$ are two different Lyndon words of length at most
$\frac{1}{4}\ell$, the sets of explicit nodes in $T_r$ and $T_{r'}$ at
string depth at least $\frac{1}{2}\ell$ are disjoint, as otherwise
from the periodicity lemma we would get that $r$ and $r'$ are cyclic
shifts of the same string. Hence if we apply the above lemma for every
different Lyndon word $r$ such that some $w_i$ has the middle part
which is a substring of $r^\infty$, all explicit nodes contributing to
the $s$ added in the space complexity will sum up to $n$. Also, all
$|r|$ will sum up to at most $\sum_i \frac{1}{4}|w_i|=\Oh(n)$, making
the total space complexity $\Oh(n\log n)$.

\section{\label{sec:space-reduction}Decreasing the space}

In this section we improve the space complexity of the solution to
$\Oh(n)$. As an intermediate step, we will first show how to make it
$\Oh(n\log n)$ by improving the solution for \PINSlong and
\PISNSlong. This almost immediately yields an improved space bound of
$\Oh(n\log n)$, as it allows us to solve long substring retrieval in
$\Oh(n)$ space by reducing the space complexity of
Lemma~\ref{lemma:periodic preprocessing} to $\Oh(|r|+s)$. Further
improvement requires more work.

We improve the solution for \PINSlong by making use of techniques from
the area of succinct data structures~\cite{J89}. We reemphasise that,
even though we do make reference to individual bits, all space bounds
are stated in words.  In our application, our goal is to spend a
constant number of words per element in our data structure, since we
desire the overall space to be linear.  The problem with the previous
solution is the extra $\Oh(N \log^2 N)$ costs in terms of the universe
size.  Here we focus on reducing the cost in terms of the universe,
$N$, by a polylogarithmic factor.  

In this section we make use of $\wordSize$ to denote the word size (in
bits) of our word-RAM.  We do this to avoid conflating the word with
the problem size, as it would become an issue in later proofs.  As in
the proof of Lemma~\ref{lemma:shrinking-nested}, we assume that we
have access to a universal table of size $\Theta(n^\varepsilon)$ to
support $\texttt{rank}$ and $\texttt{select}$ queries on small bit
vectors.

The following lemma presents a space/query time tradeoff bound for
supporting $\texttt{rank}$ and $\texttt{select}$ on bit vectors that
are weaker than that of P{\v a}tra{\c s}cu~\cite{MihaiSuccincter}.
However, our data structure is much simpler, since we do not need it
to be succinct.  Furthermore, in our problem we are also interested in
reducing the preprocessing costs, which are not discussed by P{\v
  a}tra{\c s}cu, since we eventually plan on reducing the
preprocessing time to linear.

\begin{lemma}
\label{lemma:rank-select}
A bit vector of total length $N$ bits, in which $M$ bits
are ones, can be represented by a data structure occupying $\Oh(tM +
N/\wordSize^{t})$ space, for any $t \ge 1$.  The operations $\texttt{rank}$
and $\texttt{select}$ can be performed on the bit vector in $\Oh(t)$
time. The preprocessing time is $\Oh(tM + N/\wordSize^{t})$, \emph{assuming}
the input is the list of indices of the $M$ one bits, and not
including the cost of building the universal table.
\end{lemma}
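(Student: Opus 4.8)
The plan is to build a multi-level index in the spirit of succinct \texttt{rank}/\texttt{select} structures, tuned so that the ``per-element'' cost is $\Oh(t)$ words and the ``per-universe'' cost is $\Oh(N/\wordSize^{t})$ words. First note that \texttt{select} is essentially free: the input already lists the positions of the $M$ one bits, so after sorting this list (by radix sort within the stated time bound, or we may simply assume it is sorted) we store it as an array and answer $\texttt{select}(i)$ with one lookup. Thus the real content is \texttt{rank}.

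For \texttt{rank} I would cut the bit vector into \emph{superblocks} of $\wordSize^{t}$ bits and store, in a plain array of $\Oh(N/\wordSize^{t})$ words, the number of ones strictly before each superblock; this reduces $\texttt{rank}(i)$ to that lookup plus a rank \emph{inside} the superblock containing $i$. For each \emph{non-empty} superblock I would build a trie of branching factor $\wordSize$ and depth $\Oh(t)$ on the superblock-relative positions of its ones, viewing a relative position as a sequence of $\Oh(t)$ digits in $[0,\wordSize)$ (the last level may instead store a $\wordSize$-bit content word handled by the universal table). Each trie node stores only $\Oh(1)$ machine words: a $\wordSize$-bit bitmask recording which children are present and, indexed by the rank of a child among the present children, the number of ones in all earlier present subtrees together with a pointer to that child. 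Navigation per node is $\Oh(1)$: given the next digit $d$, a single \texttt{rank} query on the bitmask (via the universal table) yields the number of present children with smaller digit, whether child $d$ itself is present, and hence both the count to accumulate and the pointer to follow. A $\texttt{rank}(i)$ query therefore performs one top-array lookup and then descends $\Oh(t)$ levels at $\Oh(1)$ per level (stopping early and returning the accumulated count if the required child is absent and finishing at the bottom with the universal table), for $\Oh(t)$ total.

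It remains to bound space and preprocessing. The top array is $\Oh(N/\wordSize^{t})$ words. Summed over all superblocks, the number of trie nodes at any fixed level is at most $M$, since each one bit lies in exactly one node per level; hence there are $\Oh(tM)$ trie nodes in total, the variable-length child arrays have total length $\Oh(tM)$, and the per-node storage is $\Oh(1)$ words provided a $\wordSize$-bit mask and a cumulative count each fit in $\Oh(1)$ words, which holds because $\wordSize$ is the word size and $N=\poly(n)$. Preprocessing is the radix sort of the one list, one linear pass to fill the top array, $\Oh(t)$ work per one bit to insert it into its trie, and a post-order pass over the $\Oh(tM)$ trie nodes to compute subtree counts. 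The main obstacle is precisely this accounting: the naive ``store all $\wordSize$ prefix sums at every node'' would cost $\Oh(tM\wordSize)$ words and a binary search inside a node would cost $\Oh(t\log\wordSize)$ time; the single trick of a word-packed $\wordSize$-bit presence mask together with a universal-table \texttt{rank} is what simultaneously brings per-node space to $\Oh(1)$ words and per-level time to $\Oh(1)$, and one must verify that this one mechanism suffices uniformly at every level, including the bottom word-level.
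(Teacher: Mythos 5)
Your proposal is correct and follows essentially the same route as the paper: a top-level array of partial sums over universe pieces of size $\wordSize^{t}$, plus, for each non-empty piece, a depth-$t$, branching-factor-$\wordSize$ recursive decomposition whose nodes store a $\wordSize$-bit presence mask, rank-indexed partial sums and child pointers, navigated in $\Oh(1)$ per level via universal-table lookups (your ``trie node'' is exactly the paper's ``packed decomposition''). The space, query-time and preprocessing accounting ($\Oh(tM)$ nodes since each level contributes at most $M$ non-empty buckets, $\Oh(N/\wordSize^{t})$ for the top array, select handled by explicitly stored answers) also matches the paper's argument.
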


\begin{proof}
For the proof we discuss how to support $\texttt{rank}$ in constant
time using the claimed amount of space.  After accomplishing this, it
is trivial to implement $\texttt{select}$ via $\Oh(M)$ additional
space by explicitly storing the answers.

Define a \emph{packed decomposition} to be a decomposition of a
universe $[1,\wordSize^u]$, for some $u\ge q $, into buckets of size $\wordSize^{u-1}$.
Each bucket in the decomposition is assigned a $1$
bit iff it is non-empty (i.e., the bucket contains at least one set
bit).  Let $B$ be a bit vector storing the bits of the decomposition.
For each $1$ bit in $B$, e.g., $B[j] = 1$ for some $1 \le j \le c\wordSize$, we
explicitly store the number of $1$s in the range $[1,(j-1)\wordSize^{u-1}/c)$.
  That is, the partial sums up to the start of the bucket represented
  by $B[j]$.  All these numbers are stored in an array $C$ of size
  $\Oh((M' \lg \wordSize)/\wordSize) = \Oh(M')$, where $M'$ is at most the
  number of $1$ bits in the whole universe. The total space required for the
  packed decomposition, i.e., to store the arrays $B$ and $C$, is therefore $\Oh(1 + M')$.

At the separate top level of our data structure, we divide the universe $[1,N]$
into buckets of size $\wordSize^{t}$.  For each bucket we explicitly store
partial sums, counting the number of ones up to the start of that
bucket in an array.  Overall, this takes space $\Oh(N/ \wordSize^{t})$.  For
each non-empty bucket of length $\wordSize^{t}$, we store a packed
decomposition on that bucket.  We recursively store packed decompositions
of the non-empty buckets until the universe is of size $\wordSize$.
Visualising the decomposition as a tree, we see the leaves are of size
$\wordSize$, the height of the tree is $t$, and every leaf contains
at least one $1$, hence the total number of internal nodes in all trees
is $tM$. At each such internal node, we store an array of pointers, each
one corresponding to a non-empty bucket and pointing to a packed
decomposition in the lower level.  Overall, this adds an additional
$\Oh(tM)$ space cost.

The space of the data structure is no more than $\Oh(tM +
N/\wordSize^{t})$ based on the arguments above. We can perform
\texttt{rank} queries by recursing down the tree at most $\Oh(t)$
levels, and computing the number of ones up to the range represented
by the tree node.  At each level in the tree this takes $\Oh(1)$ time,
since: at the top level the values are explicitly stored; at each
internal node we can in $\Oh(1)$ time determine which entry in $C$ to
add to the running total by counting the one bits in $B$ using table
lookup; and, finally, we can also count the number of one bits up to
the search position in the leaf using table lookup.

Next, we discuss the preprocessing costs.  Assume we get the input as
a list of indices of the one positions.  We recursively bucket sort
the indices to construct the tree (i.e., the pointer structure),
placing each index in its appropriate leaf in time $\Oh(tM
+N/\wordSize^t)$.  After we have the pointer structure, it is trivial
to construct the remaining data structures---partial sums and bit
vectors---via a preorder traveral of the tree, using no more than the
claimed time bound.
\qed
\end{proof}

Using the previous lemma, we can trivially answer predecessor queries
on a set $S_i$, by representing it as a bit vector and using a
\texttt{rank} query followed by a \texttt{select} query.

\begin{lemma}\label{lemma:space-efficient-nested}
\PINS can be solved using a data structure that occupies
$\Oh(N/\wordSize^{t_1} + \sum_{i}|S_{i}|)$ space and performs queries
in $\Oh(1)$ time, for any constant $t_1 \ge 1$.  The preprocessing
time is $\Oh(N/\wordSize^{t_1} + \sum_{i}|S_{i}|)$.
\end{lemma}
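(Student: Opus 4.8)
The plan is to re-run the construction in the proof of Lemma~\ref{lemma:nested} and surgically replace its only component whose size depends super-linearly on the universe $N$. Recall that there the chain $S_1\subseteq S_2\subseteq\cdots\subseteq S_k$ is split into $\Oh(\log N)$ groups according to $\lfloor\log|S_i|\rfloor$; for the last set $S_{g_k}$ of group $k$ one keeps a table of length $N$ giving the predecessor in $S_{g_k}$ of every $x\in[1,N]$, and for every $S_i$ in group $k$ one keeps a table of length $|S_{g_k}|\le 2|S_i|$ giving the predecessor (rank) in $S_i$ of every element of $S_{g_k}$. Only the former --- the $\Oh(\log N)$ tables of length $N$ --- is expensive; the latter tables already cost $\Oh(\sum_i|S_i|)$ and will be kept verbatim, together with an $\Oh(1)$-word record per $i$ of $i$'s group and of pointers to the structures it needs (this costs $\Oh(\sum_i|S_i|)$, as every $S_i$ is non-empty).

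To replace the expensive tables I would, for each group $k$, regard $S_{g_k}$ as a bit vector over $[1,N]$ and build on it the $\texttt{rank}$/$\texttt{select}$ structure of Lemma~\ref{lemma:rank-select} with the fixed parameter $t=t_1+1$. A predecessor query for $x$ in $S_{g_k}$ becomes a single $\texttt{rank}(x)$ call, returning in $\Oh(t_1+1)=\Oh(1)$ time the rank $r$ of the predecessor of $x$ in $S_{g_k}$ --- with $r=0$ certifying that $x$ has no predecessor in $S_{g_k}$, hence none in $S_i\subseteq S_{g_k}$ either, and a $\texttt{select}(r)$ recovering the predecessor element itself if desired. Feeding $r$ into the length-$|S_{g_k}|$ table of $S_i$ then returns the rank of the predecessor of $x$ in $S_i$; correctness is precisely the nesting $S_i\subseteq S_{g_k}$ (valid because $i\le g_k$), and the whole query runs in $\Oh(1)$ time.

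For the space, the indices $g_k$ are pairwise distinct, so $\sum_k|S_{g_k}|\le\sum_i|S_i|$, and the $\Oh((t_1+1)|S_{g_k}|)$ term of Lemma~\ref{lemma:rank-select} summed over groups is $\Oh(\sum_i|S_i|)$ since $t_1$ is a constant. What remains is one $\Oh(N/\wordSize^{t_1+1})$ term per group; because all integers we handle fit in a machine word we have $N\le 2^{\wordSize}$, so there are at most $\wordSize+1$ groups and these terms sum to $\Oh(N/\wordSize^{t_1})$, the claimed bound. Preprocessing is analysed identically: Lemma~\ref{lemma:rank-select} builds each bit-vector structure within its own space bound from the list of its one-positions (which we have, namely $S_{g_k}$), and each length-$|S_{g_k}|$ table of $S_i$ is filled by one linear scan that merges the sorted orders of $S_i$ and $S_{g_k}$ in $\Oh(|S_i|)$ time; the sorted orders of all $S_i$ are obtained once by radix-sorting the $\sum_i|S_i|$ pairs $(i,x)$ with $x\in S_i$, in $\Oh(\sum_i|S_i|)$ time. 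Altogether preprocessing is $\Oh(N/\wordSize^{t_1}+\sum_i|S_i|)$.

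The single non-routine point --- and the reason for taking $t=t_1+1$ instead of $t=t_1$ in Lemma~\ref{lemma:rank-select} --- is that the lemma is invoked once per group, i.e., $\Theta(\log N)$ times, so its additive $N/\wordSize^{t}$ cost is paid $\Theta(\log N)$ times; pushing that logarithm into the exponent is exactly what the trivial bound $\log N\le\wordSize$ buys. Everything else is a straightforward transcription of the proof of Lemma~\ref{lemma:nested}.
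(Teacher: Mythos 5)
Your proof is correct and follows essentially the same route as the paper: keep the group decomposition of Lemma~\ref{lemma:nested}, replace the $\Oh(\log N)$ dense length-$N$ tables by the structure of Lemma~\ref{lemma:rank-select} with parameter $t_1+1$, and absorb the $\log N$ factor on the $N/\wordSize^{t_1+1}$ term via $\log N\le\wordSize$. The only (immaterial) difference is that you build a full-universe rank/select structure for every group representative $S_{g_k}$, whereas the paper builds one over $[1,N]$ for $S_{g_{\log N}}$ only and represents the other $S_{g_j}$ relative to the universe $[1,|S_{g_{\log N}}|]$; both yield the same bounds by the same argument.
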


\begin{proof}
We follow the same general strategy as Lemma~\ref{lemma:nested}. We
note that to prove the lemma it suffices to reduce the space of the
predecessor data structures for the sets $S_{g_1}, ..., S_{g_{\log
    N}}$ to $\Oh(N/\wordSize^{t_1} + \sum_{i}|S_{i}|)$ space: the
remaining predecessor data structures for the other sets occupy no
more than $\Oh(\sum_{i}|S_i|)$ space by replacing them by
Lemma~\ref{lemma:rank-select}.  We store set $S_{g_{\log N}}$ in the
data structure of Lemma~\ref{lemma:rank-select}, constructed over the
universe $[1,N]$, with parameter $t_0$ (which will be fixed later).
Each set $S_{g_j}$, $j\in[1,\log N-1]$ is represented using the data
structure of Lemma~\ref{lemma:rank-select} for the universe
$[1,|S_{g_{\log N}}|]$, marking the elements in $S_{g_{\log N}}$ which
are present in $S_{g_j}$ with $1$ bits, again for the parameter $t_0$.
Together these structures allow us to locate the predecessor of an
element in $S_{g_j}$, and occupy at most $\Oh(N \log N /
\wordSize^{t_0} + \sum_i|S_i|)$ space.  Thus, by ensuring $t_0 =
t_1+1$, we get the claimed space bound.

For the preprocessing time, we observe that computing the groups can
be done in time $\Oh(\sum_i|S_i|)$.  After computing the groups, we
can construct the predecessor data structure of
Lemma~\ref{lemma:rank-select} in time proportional to their space.
\qed
\end{proof}

Using the above result, we can improve the solution for \PISNS.

\begin{lemma}
\PISNS can be solved using $\Oh(N/\wordSize^{t_2} +\sum_{i}|S_{i}|)$
space for the preprocessing and performs queries in $\Oh(1)$ time, for
any constant $t_2 \ge 1$.  The preprocessing time is
$\Oh(N/\wordSize^{t_2} + \sum_i |S_i|)$.
\end{lemma}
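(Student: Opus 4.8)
The plan is to mimic the recursive decomposition used in the proof of Lemma~\ref{lemma:shrinking-nested}, but replace every underlying \PINS structure by the space-efficient version from Lemma~\ref{lemma:space-efficient-nested}, and replace the guide bit vectors by instances of the \texttt{rank}/\texttt{select} structure of Lemma~\ref{lemma:rank-select}. Concretely, we again split the universe at $\frac{N}{2}$, pick the index $k'$ with $n_{k'}\le\frac{N}{2}<n_{k'+1}$ (or $k'=k$), form the upper-half \PINS instance $S'_1,\ldots,S'_{k'}\subseteq[1,\frac N2]$ and recurse on the two halves exactly as before. The recursion tree has depth $\log N$ and the universes at each level sum to $N$, so the \PINS instances produced at a single level, preprocessed via Lemma~\ref{lemma:space-efficient-nested} with parameter $t_1$, occupy $\Oh(N/\wordSize^{t_1}+\sum_i|S_i|)$ space; summing over the $\log N$ levels gives $\Oh((N\log N)/\wordSize^{t_1}+\sum_i|S_i|\log N)$. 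Neither factor of $\log N$ is acceptable, so the two places we have to be careful are (i) charging the universe term and (ii) charging the $\sum_i|S_i|$ term only once rather than once per level.

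For (ii), the key observation is the identity already used in Lemma~\ref{lemma:shrinking-nested}: at each recursion step the sets we recurse on have total size $\sum_i|S_i|-\sum_i|S'_i|$, i.e.\ every element is ``consumed'' by exactly one \PINS instance over the whole recursion (the instance at the level where that element first falls into the upper half). Hence $\sum$ over all \PINS instances of $\sum$ of their set sizes is exactly $\sum_i|S_i|$, not $\sum_i|S_i|\log N$; this was already implicit in the earlier proof and carries over verbatim. For (i), we choose $t_1$ slightly larger than the target: since $\wordSize=\Theta(\log n)$ and $\log N=\Oh(\log n)=\Oh(\wordSize)$, taking $t_1=t_2+1$ makes $(N\log N)/\wordSize^{t_1}=\Oh(N/\wordSize^{t_2})$. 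So the \PINS instances contribute $\Oh(N/\wordSize^{t_2}+\sum_i|S_i|)$ total.

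The remaining issue is the routing machinery. In Lemma~\ref{lemma:shrinking-nested} each $S_i$ had a guide bit vector of length $\log N$ plus one pointer per set bit; the $\sum_i|S_i|$ pointers are fine, but the $\sum_i\log N$ bits are $\Oh(\sum_i|S_i|\cdot\log N/\wordSize)=\Oh(\sum_i|S_i|)$ words since $\log N=\Oh(\wordSize)$ — so the guide bit vectors are already within budget and can be kept as is, using table lookup for \texttt{rank}/\texttt{select} on them. The top-level index telling us, for a query value $x$, which recursion level contains $x$, together with the partial-sum/bucket information of each level's universe split, is handled exactly as the guide bit vector analysis in Lemma~\ref{lemma:shrinking-nested}: a constant number of \texttt{rank} and \texttt{select} operations on $\Oh(\log N)$-bit words, each $\Oh(1)$ by table lookup. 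Thus the query still costs $\Oh(1)$.

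For the preprocessing time, every component is built in time proportional to its space: the groups within each \PINS instance are computed in time $\Oh(\sum|S_i|)$ over that instance by a single scan, the structures of Lemma~\ref{lemma:rank-select} are built in time proportional to their size by that lemma, and the recursion's bookkeeping (the split at $\frac N2$, the index $k'$, the offset data, the guide vectors) is linear in the sizes involved. Summing as above gives total preprocessing time $\Oh(N/\wordSize^{t_2}+\sum_i|S_i|)$. The only real subtlety to watch is making sure that passing to the sub-universe $[1,\frac N2]$ and then to $[1,|S_{g_{\log N}}|]$ inside Lemma~\ref{lemma:space-efficient-nested} does not reintroduce a logarithmic universe overhead; this is controlled precisely because the universe sizes telescope across levels, which is the same bound already invoked for the space, so no new obstacle appears there. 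I would expect the write-up to be short, essentially pointing to Lemma~\ref{lemma:shrinking-nested} and Lemma~\ref{lemma:space-efficient-nested} and checking the two charging arguments above.
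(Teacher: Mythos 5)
Your space and query-time analysis is essentially the paper's: recurse as in Lemma~\ref{lemma:shrinking-nested}, replace each \PINS instance by Lemma~\ref{lemma:space-efficient-nested} with $t_1=t_2+1$ so that the $\log N$ levels of universe cost $(N\log N)/\wordSize^{t_1}=\Oh(N/\wordSize^{t_2})$, observe that every element is charged to exactly one \PINS instance, and note that the guide bit vectors fit in $\Oh(\sum_i|S_i|)$ words because each $S_i$ is non-empty and $\log N\le\wordSize$. All of that is correct and matches the paper.

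The genuine gap is in the preprocessing time, which you dismiss with ``the recursion's bookkeeping (the split at $\frac N2$, the index $k'$, \ldots) is linear in the sizes involved.'' It is not, if done naively, and this is precisely where the paper does real work. At each recursion node you must locate the split index $k'$ among the sets handled by that node; a plain binary search costs $\Theta(\log k)$ per node, and the recursion tree can have $\Theta(k)$ genuinely splitting nodes (take $k$ singleton sets spread over a universe of size $N=k^2$: then $\sum_i|S_i|=k$ but the total search cost is $\Theta(k\log k)$, which for suitable $k$ relative to $\wordSize$ exceeds the claimed $\Oh(N/\wordSize^{t_2}+\sum_i|S_i|)$). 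The paper fixes this by first radix-sorting all sets in $\Oh(N^{\varepsilon}+\sum_i|S_i|)$ time and then finding $k'$ by an exponential search run \emph{simultaneously from both ends} of the current range of sets, so that a single split costs $\Oh(\log\min(k',k-k'))$; the resulting recurrence $T(N,k)\le\Theta(\log\min(k',k-k'))+T(N/2,k')+T(N/2,k-k')$ is then shown by induction (with the potential $\alpha k-\beta\sqrt{k}$) to solve to $\Oh(k)=\Oh(\sum_i|S_i|)$. Without this two-sided search and its amortised analysis, the stated preprocessing bound does not follow.
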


\begin{proof}
Let $k$ denote the total number of sets.  Combining
Lemmas~\ref{lemma:shrinking-nested}
and~\ref{lemma:space-efficient-nested} and setting the parameter $t_1
= t_2 +1$ yields the desired space bound.

For the preprocessing time, we first show that the decomposition of
the input sets into their respective levels can be computed in the
claimed time bound. This can be done by using radix sort on all the
input sets simultaneously.  This takes $\Oh(N^\varepsilon + \sum_i
|S_i|)$ time, for any constant $0 < \varepsilon < 1$.  We also compute
the maximum element of each set $S_i$, denoted $n_i$, which can be
done within the same time bound.  Using this information together with
the sorted lists, we construct the decomposition recursively in the
following way.  We assume $N$ is a power of two to make the analysis
simpler, and start the algorithm at level $\ell = 1$, $N_L = 1$ and
$N_H = N$.

\begin{enumerate}
\item If $N_L = N_H$ or $k = 1$, construct the \PINS
  subproblem and exit.
\item Otherwise, binary search for the set $n_{k'}$, such that $n_{k'}
  \ge (N_L+N_H)/2$ and $(N_L+N_H)/2 > n_{k' +1}$.
\item \label{enum-step:remove} Scan the sorted lists $S_{k'},
  S_{k'-1}, ...$, removing the elements that are in the range
  $[N_L,(N_L+N_H)/2]$.  Once we encounter a list
  $S_{j}$ containing no elements in
  $[N_L,(N_L+N_H)/2]$ we stop.
\item \label{enum-step:guide-bv} Set bit $\ell$ in the guide vectors
  for $S_{j},...,S_{k'}$.  
\item \label{enum-step:PIMS} Construct the \PINS instance on the
  removed elements (if there are any) with a universe
  $[N_L,N_{H}]$.  Also store a pointer to this
  subproblem for each guide bit that was set in the previous step.
\item Recurse on the sets $S_1, ..., S_{k'}$ with level $\ell + 1$,
  $N_L = (N_L+N_H)/2 + 1$, and $N_H = N_H$.
\item Recurse on the sets $S_{k'+1}, ..., S_{k}$ with level $\ell +
  1$, $N_L = N_L$ and $N_H = (N_L+N_H)/2$.
\end{enumerate}

The cost of steps \ref{enum-step:remove} and \ref{enum-step:guide-bv}
is bounded by the number of elements passed to the \PINS subproblem,
due to the nesting property of the sets.  By
Lemma~\ref{lemma:space-efficient-nested} the cost of steps
\ref{enum-step:remove}, \ref{enum-step:guide-bv}, and
\ref{enum-step:PIMS} over the \emph{entire} algorithm is clearly no
more than $\Oh(N/\wordSize^{t_2} + \sum_i |S_i|)$, since each range of
the universe appears in at most $\log N$ levels, and
$\wordSize\ge\log N$.  Thus, we need only analyse the cost of the
remaining steps.  Bounding the cost of the binary search in terms of
the number of sets $k$, we get the recurrences:
\begin{eqnarray*}
T(1,k) &=& \Theta(1)\\
T(N,1) &=& \Theta(1)\\
T(N,k) &\le& \Theta(\log \min(k',k-k')) + T(N/2, k') + T(N/2, k - k')
\end{eqnarray*}
\noindent
Notice that want to bound the time of a single search by $\Oh(\log\min(k',k-k'))$,
which requires starting it simultaneously from both ends.
Then, one can choose coefficients $\alpha$ and $\beta$ so that
$T(N,k) \le \alpha k - \beta \sqrt{k}$. This is because, by induction,
we only need to bound $$\log \min(k',k-k') + \alpha k' -\beta \sqrt{k'} + \alpha(k-k') - \beta \sqrt{k-k'}$$
which is maximised for $k'=k/2$, and we can always choose 
$\beta$ large enough so that $\log k -2\beta \sqrt{k/2} \le -\beta \sqrt{k}$.
Then we select $\alpha$ large enough so that the base of the induction
holds. Thus, we get that $T(N,k) \le \Theta(k) \le \Oh(\sum_i|S_i|)$, and the
overall cost is therefore $\Oh(N/\wordSize^{t_2} + \sum_i(|S_i|))$.
\qed
\end{proof}

This gives us the basic tools needed to improve the space complexity
of the whole algorithm.  Now we need to carefully look at all of its
components.  First of all, we need to decrease the space bound in
Lemma~\ref{lemma:periodic preprocessing}.  We would like to reduce it
to $\Oh(|r|/\wordSize+s)$.  By plugging in the better implementation
of \PINSlong, we can reduce the bound to $\Oh(|r|/\wordSize^{t}+s)$,
plus the space needed to store, for every $i$, the pointer to the
explicit node with the largest string depth on $p_{i}$. By storing the
pointer only if the corresponding explicit node is at string depth at least
$\frac{1}{2}\ell$ and recalling that no two paths can share
a prefix of length $\frac{1}{2}\ell$, we decrease the total space used
by the pointers to $\Oh(|r|/\wordSize+s)$.

The second step is to relax the definition of long substring retrieval.
Recall that the goal was to preprocess a generalised suffix tree
built for a collection of $\beta=\Oh(n/\ell)$ documents $w_{1},w_{2},\ldots,w_{\beta}$,
all of the same length $\ell$, so that we can retrieve the node corresponding
to any $w_{i}[j..k]$ of length at least $\frac{3}{4}\ell$.
In {\it generalised long substring retrieval}, we consider a collection
of $\beta=\Oh(n/\ell)$ documents $w_{1},w_{2},\ldots,w_{\beta}$, all of length
at most $\ell$. We want to preprocess the bottom part of their
generalised suffix tree consisting of all nodes at string depth
at least $\frac{3}{4}\ell$, so that given a pointer to a leaf at
string depth at least $\frac{3}{4}\ell$, we can perform a predecessor search among
all of its explicit ancestors in the bottom part. Hence the difference
between the non-generalised and generalised version is that we allow
some of the strings to be shorter, and we assume that we are given
a pointer to a leaf as opposed to just the numbers $i,j,k$.

\begin{lemma}
\label{lemma:better long substring retrieval}
After $\Oh(n/\wordSize+n/\ell+s)$ space preprocessing,
where $s$ is the number of explicit nodes at string depth at least
$\frac{1}{2}\ell$ in the generalised suffix tree,
generalised long substring retrieval can be solved in $\Oh(1)$ time.
\end{lemma}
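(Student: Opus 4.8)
The plan is to run the solution of Section~\ref{sec:long} more or less verbatim, but (i) replace the \PISNS/\PINS structures by the space-efficient versions proved above (Lemma~\ref{lemma:space-efficient-nested} and the \PISNS lemma following it), (ii) replace Lemma~\ref{lemma:periodic preprocessing} by its improved version with space $\Oh(|r|/\wordSize+s)$ described just above, and (iii) redo the ``which level'' bookkeeping so that nothing is stored per leaf. Concretely, I would split the bottom part of the generalised suffix tree at string depth $\ge\frac34\ell$ into the active part $T'$ and, for each primitive $r$ with $|r|\le\frac14\ell$ occurring as the period of the middle part of some document, the periodic piece $T_r$. Lemmas~\ref{lemma:suffix active}--\ref{lemma:sum of increases} and the level/path/cycle-and-chain decomposition apply unchanged. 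Preprocessing each cycle or chain at level $k$ with the space-efficient \PISNS costs $\Oh\big((3n/2^{k})/\wordSize^{t}+\#\{\text{explicit nodes at level }k\}\big)$ by Lemmas~\ref{lemma:telescope} and~\ref{lemma:sum of increases}; summing over $k\ge0$ (with a fixed $t\ge1$) gives $\Oh(n/\wordSize+s)$. Summing the improved Lemma~\ref{lemma:periodic preprocessing} over all relevant $r$ gives $\Oh(n/\wordSize+s)$ as well, since $\sum_r|r|=\Oh(\sum_i|w_i|)=\Oh(n)$ and the explicit nodes charged to distinct $T_r$'s are disjoint. For the level step I would mark the explicit nodes whose parent has strictly larger level and apply Lemma~\ref{lemma:marked predecessor}, but to the tree $\hat T$ of the $\le s$ explicit nodes at string depth $\ge\frac34\ell$ only; a root-to-leaf path of $\hat T$ still meets $\Oh(\log n)$ marked nodes, so this costs $\Oh(s)$. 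Each explicit node of $\hat T$ stores a pointer to its path (NULL if the node is not active), and each path a pointer to its cycle or chain.

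The new point is the $\Oh(1)$-time, $\Oh(1)$-word-per-document replacement for the per-leaf ``smallest active-ancestor depth'' data used in Section~\ref{sec:long}. For every document $w_a$ I would precompute whether $w_a[\frac14\ell..\frac34\ell]$ is periodic and, if so, its period $p_a$, the Lyndon word $r_a$, the offset of $w_a$ inside $r_a^{\infty}$, the maximal interval $[l_a,h_a]$ on which $w_a$ has period $p_a$, and a pointer to the structure for $T_{r_a}$. Given a query leaf $v$ for a suffix $w_a[b..]$ (so $b\le\frac14\ell$) and a length $d\ge\frac34\ell$, take $e=\mathrm{parent}(v)$: if its string depth is below $\frac34\ell$ there is no explicit ancestor in the bottom part, and if $d$ is at least the string depth of $e$ the answer is $e$ itself; otherwise, using the periodicity lemma one determines in $\Oh(1)$ time from $l_a,h_a,p_a$ whether the relevant ancestors of the queried node have period $\le\frac14\ell$. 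The structural facts one needs (mild restatements of Lemma~\ref{lemma:suffix active} and of arguments in Section~\ref{sec:long}) are: a non-active node at string depth $\ge\frac34\ell$ forces $w_a$ to have a periodic middle part; all explicit ancestors at string depth $\ge\frac34\ell$ of a non-active node are non-active, hence lie on one path of $T_{r_a}$; $e$ is active whenever the sought predecessor is active; and, when $e$ is active, the marked ancestor returned by Lemma~\ref{lemma:marked predecessor} at $e$ is itself active and its path contains the predecessor. Hence the query reduces to at most one \PISNS query on the path found via Lemma~\ref{lemma:marked predecessor} at $e$ (run only if $e$ is active) and at most one query to $T_{r_a}$ with truncated length $\min(d,\,h_a-b+1)$ (run only if $w_a$ has a periodic middle part); returning the deeper of these answers, discarding any answer of string depth below $\frac34\ell$, is the required predecessor, from which the level-ancestor structure recovers the node.

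Every query does $\Oh(1)$ table lookups, \PINS/\PISNS and marked-predecessor queries, and arithmetic, so the time is $\Oh(1)$; the space tally is $\Oh(n/\wordSize+n/\ell+s)$, the $n/\ell$ term coming from the per-document data and the per-path pointers, the $s$ term from $\hat T$ and the per-explicit-node data, and the $n/\wordSize$ term from the space-efficient \PISNS and periodic structures (plus, if needed, a one-bit-per-node vector over the generalised suffix tree recording string depth $\ge\frac34\ell$). I expect the main obstacle to be exactly this periodicity-driven case analysis --- showing that $\Oh(1)$ information per document suffices to decide whether the active or the periodic structure holds the answer, and nailing the boundary cases (truncation at $h_a-b+1$, answers that drop below $\frac34\ell$, $e$ active versus not) --- since once that is in place the space accounting is a routine composition of Lemmas~\ref{lemma:telescope}, \ref{lemma:sum of increases}, \ref{lemma:space-efficient-nested} and the improved Lemma~\ref{lemma:periodic preprocessing}.
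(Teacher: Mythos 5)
Your proposal is correct and follows the same skeleton as the paper's proof: split the bottom part into the active forest and the periodic pieces $T_r$, plug in the space-efficient \PISNS and the improved version of Lemma~\ref{lemma:periodic preprocessing}, charge $\sum_r|r|/\wordSize$ to $\Oh(n/\wordSize)$ and the explicit nodes to $s$, and restrict the marked-node/level machinery of Lemma~\ref{lemma:marked predecessor} to the $\Oh(s)$ explicit nodes of the bottom part (the paper leaves this last restriction implicit; making it explicit as you do is the right call). The one place you genuinely diverge is the test "is the queried node active, i.e., which of the two structures holds the answer?". You insist on storing nothing per leaf and instead reconstruct this from $\Oh(1)$ words of periodicity data per document ($p_a$, $r_a$, the maximal periodic interval $[l_a,h_a]$), querying both structures with a truncated length and taking the deeper answer. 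This works --- the key facts you list (non-activeness propagates upward, a non-active ancestor of the leaf of $w_a[b..]$ forces period $p_a$ on the represented string and hence membership in one path of $T_{r_a}$ with string depth at most $h_a-b+1$) do go through via the periodicity lemma --- but it is exactly the "main obstacle" you identify, and it is unnecessary: the paper simply stores, for every leaf of the bottom part, the smallest string depth of an active ancestor. Since every leaf that can ever be handed to a query is at string depth at least $\frac{3}{4}\ell\ge\frac{1}{2}\ell$, these leaves are already counted in $s$, so this costs only $\Oh(s)$ words and reduces the activeness test to a single comparison. In short, your per-document bookkeeping buys nothing over the paper's per-leaf bookkeeping within the stated space budget, at the price of a delicate boundary-case analysis; otherwise the two proofs coincide.
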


\begin{proof}
We separately preprocess all active and non-active nodes of the
generalised suffix tree. For the active nodes, by plugging in the
improved solution for \PISNSlong, we decrease the space usage to
$\Oh(n/\wordSize+s_{a})$, where $s_{a}$ is the number of active
explicit nodes. For the remaining nodes, for every Lyndon word $r$
such that at least one $w_{i}[\frac{1}{4}\ell..\frac{3}{4}\ell]$ is a
substring of $r^{\infty}$, we need $\Oh(|r|/\wordSize+s_{r})$ space,
where $s_{r}$ is the number of explicit nodes at string depth at least
$\frac{1}{2}\ell$ in $T_{r}$, where $T_{r}$ is the subtree of the
whole generalised suffix tree corresponding to all substrings of
$r^{\infty}$ of length at least $\frac{1}{2}\ell$. As mentioned
before, all these $s_{r}$ sum up to at most the number of explicit
nodes at string depth at least $\frac{1}{2}\ell$ in the generalised
suffix tree, hence the total space complexity for all such $r$ is
$\Oh(n/\wordSize+s)$, where $s$ is the number of explicit nodes at
string depth at least $\frac{1}{2}\ell$ in the generalised suffix
tree. Additionally, we need to store for every leaf of the generalised
suffix tree its active ancestor with the smallest string depth, and
for every $w_{i}$ its corresponding $r$, if any (more precisely, a
pointer to the structure corresponding to this $r$, and furthermore
the position of some occurrence of $r$ in
$w_{i}[\frac{1}{4}\ell..\frac{3}{4}\ell]$). The former requires
$\Oh(s)$ and the latter $\Oh(n/\ell)$ space, respectively.
\qed
\end{proof}

Now we modify Lemma~\ref{lemma:reduction}. Recall that the idea
there was that, for every $\ell=\alpha 2^{k}$, where $\alpha=8,9,\ldots,15$,
we create an instance of long substring retrieval with $\Oh(n/\ell)$
documents of length $\ell$. Now we would like to say that
preprocessing each of these instances with Lemma~\ref{lemma:better long substring retrieval}
ensures that the total cost is just $\Oh(n)$, because all values of $s$
sum up to $\Oh(n)$. Unfortunately, this is not true, as an explicit node
in the generalised suffix tree built for all $w_{i}$ is not necessarily
an explicit node in the suffix tree of the whole $w$. Indeed, all
leaves of the generalised suffix tree are explicit there, but don't appear
in the suffix tree.

To fix this issue, we appropriately shorten every $w_{i}$. We choose its longest
suffix $w_{i}[j..\ell]$ such that its corresponding node in the suffix tree
has at least one explicit ancestor at string depth at least $\frac{3}{4}\ell$.
If there is no such $j\geq\frac{3}{4}\ell$, we remove $w_{i}$ from our collection, and otherwise
replace it with $w_{i}[j..\ell]$. Then for any even shorter suffix $w_{i}[j'..\ell]$, such
that $j' \geq \frac{1}{2}\ell$, the corresponding node in the suffix tree
has at least one explicit ancestor at string depth at least $\frac{1}{4}\ell$.
Hence the total number of leaves at string depth at least $\frac{1}{2}\ell$ in
the generalised suffix tree built for all shortened $w_{i}$'s can be upper
bounded by the total number of explicit nodes at string depths between
$\frac{1}{4}\ell$ and $\ell$ in the suffix tree. Bounding the number of
leaves also gives us a bound on the total number of explicit nodes in
the bottom part of the generalised suffix tree, therefore now for every $\ell$
we can bound the required space by $\Oh(n/\wordSize+n/\ell+s)$, where
$s$ is the number of explicit nodes at string depth between $\frac{1}{2}\ell$
and $\ell$ in the suffix tree. Because the values of $\ell$ are exponentially
decreasing, the sum of all these values of $s$ is then at most $\Oh(n)$,
resulting in the final bound of $\Oh(n)$ on the required space.

Finally, we describe how to answer a query using the structures for generalised
substring retrieval built for the shortened strings. For this, given a substring $s$ of $w$,
we need to access an appropriately chosen instance of generalised substring
retrieval, and also locate the leaf of the corresponding generalised suffix tree.
Recall that in the proof of Lemma~\ref{lemma:reduction} we were able to find
the instance by simply computing $\alpha\in\{8,9,\ldots,15\}$ and $k$ such that
$(\alpha-2)2^{k}\leq |s|<(\alpha-1)2^{k}$. Now, however, it might be the case
that the only $w_{i}$ in the instance containing $s$ as a substring has
been shortened, hence we cannot use it to retrieve the node corresponding to
$s$ in the suffix tree. In such case, though, it must be an implicit node
lying on a relatively long edge, i.e., an edge from a node at string depth
at least $\ell$ to a node at string depth at most $\frac{3}{4}\ell$. This
suggest a simple fix: for every $\ell=\alpha 2^{k}$, where $\alpha\in\{8,9,\ldots,15\}$,
we mark all the explicit nodes of the suffix
tree, such that their string depth is between $\frac{3}{4}\ell$ and $\ell$, but
all their (explicit) descendants have string depth exceeding $\ell$.  Then,
on any path from a leaf to the root, at most a single explicit node for every such $\ell$
is marked, hence using Lemma~\ref{lemma:marked predecessor} we can
preprocess all these marked nodes in $\Oh(n)$ space, so that given
a leaf in the suffix tree we can search for the predecessor among its
marked ancestors in $\Oh(1)$ time. Notice that the same explicit node $v$ might
be marked because of multiple values of $\ell$. Nevertheless, there is
a constant number of such relevant values of $\ell$.
For every such value, we store a pointer to the corresponding instance
of generalised long substring retrieval at $v$, and also a pointer to
any leaf in the subtree of the node corresponding to $v$ in the bottom
part of the generalised suffix tree constructed for the instance.

To locate the node corresponding to $s=w[i..j]$ in the suffix tree,
we first execute a predecessor search among the marked ancestor of the
leaf corresponding to $w[i..n]$. As a result, we get a marked node $u$
belonging to the subtree of $v$, such there are no marked nodes between
$v$ and $u$. Now the first possibility is that $v$ lies on the edge
from $u$ to its parent. If not, then $v$ lies on an edge from some $v'$
to its parent (possibly, $v'=v$), where $v'$ is an ancestor of $u$. Furthermore,
all nodes between $u$ and $v'$, including $v'$, are not marked. 
But then the string depth of $v'$ must be quite similar to the string depth
of $u$. More precisely, if the string depth of $u$ is between $\frac{3}{4}\ell$
and $\ell$ for some $\ell=\alpha 2^{k}$, where $\alpha\in\{8,9,\ldots,15\}$, then
the string depth of $u'$ must be within the same range. Otherwise, i.e.,
if the string depth of $u'$ was smaller than $\frac{3}{4}\ell$, then we could
find $\ell' = \alpha' 2^{k'}$, where $\alpha'\in\{8,9,\ldots,15\}$ and $k'<k$, such
that the string depth of $u'$ is between $\frac{3}{4}\ell'$ and $\ell'$,
hence some node between $u'$ would have been marked, which is absurd.
Therefore, using the pointers stored at $u$, we reduce the question to
generalised long substring retrieval, which can be solved in $\Oh(1)$ time.

\bibliographystyle{splncs03}
\bibliography{biblio}

\begin{thebibliography}{10}
\providecommand{\url}[1]{\texttt{#1}}
\providecommand{\urlprefix}{URL }

\bibitem{Ajtai}
Ajtai, M., Fredman, M., Komlós, J.: Hash functions for priority queues.
  Information and Control  63(3),  217 -- 225 (1984)

\bibitem{Alstrup}
Alstrup, S., Holm, J.: Improved algorithms for finding level ancestors in
  dynamic trees. In: ICALP. pp. 73--84 (2000)

\bibitem{ALLS07}
Amir, A., Landau, G.M., Lewenstein, M., Sokol, D.: Dynamic text and static
  pattern matching. ACM Transactions on Algorithms  3(2) (2007)

\bibitem{LevelAncestor}
Bender, M.A., Farach-Colton, M.: The level ancestor problem simplified. Theor.
  Comput. Sci.  321(1),  5--12 (2004)

\bibitem{Berkman}
Berkman, O., Vishkin, U.: Finding level-ancestors in trees. J. Comput. Syst.
  Sci.  48(2),  214--230 (Apr 1994)

\bibitem{BGVV12}
Bille, P., G{\o}rtz, I.L., Vildh{\o}j, H.W., Vind, S.: String indexing for
  patterns with wildcards. In: SWAT. pp. 283--294 (2012)

\bibitem{CGL04}
Cole, R., Gottlieb, L.A., Lewenstein, M.: Dictionary matching and indexing with
  errors and don't cares. In: STOC. pp. 91--100 (2004)

\bibitem{Dietz}
Dietz, P.: Finding level-ancestors in dynamic trees. In: WADS. pp. 32--40
  (1991)

\bibitem{FarachHashing}
Farach, M., Muthukrishnan, S.: Perfect hashing for strings: Formalization and
  algorithms. In: CPM. pp. 130--140 (1996)

\bibitem{Fredman}
Fredman, M.L., Willard, D.E.: Trans-dichotomous algorithms for minimum spanning
  trees and shortest paths. J. Comput. Syst. Sci.  48(3),  533--551 (1994)

\bibitem{GabowTarjan}
Gabow, H.N., Tarjan, R.E.: A linear-time algorithm for a special case of
  disjoint set union. In: Proc. Symposium on Theory of Computing. pp. 246--251.
  ACM (1983)

\bibitem{Gawrychowski}
Gawrychowski, P.: Pattern matching in {L}empel-{Z}iv compressed strings: fast,
  simple, and deterministic. In: ESA. pp. 421--432 (2011)

\bibitem{Grossi}
Grossi, R., Orlandi, A., Raman, R., Rao, S.S.: More haste, less waste: Lowering
  the redundancy in fully indexable dictionaries. In: STACS. pp. 517--528
  (2009)

\bibitem{J89}
Jacobson, G.: {Space-efficient static trees and graphs}. In: {Proc. Symposium
  on Foundations of Computer Science}. pp. 549--554. IEEE (1989)

\bibitem{KKNS14}
Kopelowitz, T., Kucherov, G., Nekrich, Y., Starikovskaya, T.A.: Cross-document
  pattern matching. J. Discrete Algorithms  24,  40--47 (2014)

\bibitem{KopelotDynamicWeighted}
Kopelowitz, T., Lewenstein, M.: Dynamic weighted ancestors. In: SODA. pp.
  565--574 (2007)

\bibitem{LNV14}
Lewenstein, M., Nekrich, Y., Vitter, J.S.: Space-efficient string indexing for
  wildcard pattern matching. In: STACS. pp. 506--517 (2014)

\bibitem{MihaiSuccincter}
P{\v a}tra{\c s}cu, M.: Succincter. In: Proc. 49th IEEE Symposium on
  Foundations of Computer Science (FOCS). pp. 305--313 (2008)

\bibitem{Patrascu08}
P\u{a}tra\c{s}cu, M.: Predecessor search. In: Encyclopedia of Algorithms (2008)

\bibitem{PatrascuPredecessor}
P\u{a}tra\c{s}cu, M., Thorup, M.: Time-space trade-offs for predecessor search.
  In: STOC. pp. 232--240 (2006)

\bibitem{Willard83}
Willard, D.E.: Log-logarithmic worst-case range queries are possible in space
  $\theta(n)$. Inf. Process. Lett.  17(2),  81--84 (1983)

\end{thebibliography}

\newpage
\appendix

\section{\label{sec:lb}Lower bound for arbitrary trees with small weights}

In this section we show that answering weighted ancestor queries on an
\emph{arbitrary tree} requires $\Omega(\lg \lg n)$ time using a data
structure of size $\Oh(n \polylogws( n))$, \emph{even if the node
  weights are bounded by $n$}.

Assume that, given such a tree on $n$ nodes, we can construct a data
structure occupying $\Oh(n \polylogws(n))$ space weighted ancestor
structure supporting queries in $t(n)$ time. Then we can construct a
predecessor structure for $n$ elements, drawn from the universe
$[1,n^2]$, that occupies $\Oh(n \polylogws(n))$ space and supports
queries in $t(n)$ time as follows.

\begin{enumerate}
\item \label{enum-step:pred-block} Split the universe into equally
  sized blocks of length $n$, and for each block $[x(n-1) + 1,xn]$,
  where $1 \le x \le n$, explicitly store the element that is the
  predecessor $x(n-1)$.  This adds $\Oh(n)$ space overall.

\item \label{enum-step:pointer-block} For each block create a separate
  path with string depth $n$ and put the elements contained in the
  block on the path. For each non-empty block, we also store a pointer
  to the lowest node in this path in an array.  As before this array
  will take an additional $\Oh(n)$ space.

\item To answer a predecessor query, first find the block containing
  the query element.  If a pointer is stored for this block in
  step~\ref{enum-step:pointer-block}, we do a weighted ancestor query
  on the corresponding path. If no pointer is stored, or the result of
  the weighted ancestor query is the root---which we consider a dummy
  node, and interpret as meaning that no predecessor exists in the
  path---then we return the result stored in the array from the
  step~\ref{enum-step:pred-block}.
\end{enumerate}

Immediately, since the universe is at least
$\Omega(n^{1+\varepsilon})$ and the space occupied by the data
structure is $\Oh(n\polylogws(n))$, the query must take $\Omega(\log\log
n)$ time by the lower bound of P\u{a}tra\c{s}cu and
Thorup~\cite{PatrascuPredecessor}.

\end{document}